\newlength \figwidth
\def\BState{\State\hskip-\ALG@thistlm}
\def\BibTeX{{\rm B\kern-.05em{\sc i\kern-.025em b}\kern-.08em
    T\kern-.1667em\lower.7ex\hbox{E}\kern-.125emX}}
\newcommand*\xbar[1]{%
  \hbox{%
    \vbox{%
      \hrule height 0.5pt 
      \kern0.36ex
      \hbox{%
        \kern-0.12em
        \ensuremath{#1}%
        \kern-0.12em
      }%
    }%
  }%
}
\newfont{\bbb}{msbm10 scaled 500}
\newfont{\bb}{msbm10 scaled 1100}
\newcommand{\Rc}{{\cal R}}
\newcommand{\bx}{{\text{b}}}
\newcommand{\hx}{{\text{h}}}
\newcommand{\nx}{{\text{n}}}
\newcommand{\ox}{{\text{o}}}
\newcommand{\px}{{\text{p}}}
\newcommand{\sx}{{\text{s}}}
\renewcommand{\d}{\mathrm{d}}
\def\e{\text{e}}
\newcommand{\executeiffilenewer}[3]{%
\ifnum\pdfstrcmp{\pdffilemoddate{#1}}%
{\pdffilemoddate{#2}}>0%
{\immediate\write18{#3}}\fi%
}
\newcommand{%
\executeiffilenewer{.svg}{.pdf}%
{inkscape -z -D --file=.svg --export-pdf=.pdf --export-latex}%
\input{.pdf_tex}%

}[1]{%
\executeiffilenewer{#1.svg}{#1.pdf}%
{inkscape -z -D --file=#1.svg --export-pdf=#1.pdf --export-latex}%
\input{#1.pdf_tex}%

}
\begin{document}
\pagenumbering{gobble}

\def \pcov{\mathcal{P}_\mathrm{cov}}
\def \pcovl{\mathcal{P}_\mathrm{cov}^\mathrm{L}}
\def \pcovn{\mathcal{P}_\mathrm{cov}^\mathrm{N}}
\def \pcovu{\mathcal{P}_\mathrm{cov}^\upsilon}
\def \pcovuav{\mathcal{P}_\mathrm{u}}
\def \pcovc{\mathcal{P}_\mathrm{g}}
\def \sinr{\mathsf{SINR}}
\def \du{d_\mathrm{u}}
\def \dc{d_\mathrm{g}}
\def \db{d_\mathrm{b}}
\def \ru{r_\mathrm{u}}
\def \Ru{R_\mathrm{u}}
\def \rc{r_\mathrm{g}}
\def \rb{r_\mathrm{b}}
\def \al{\alpha_\mathrm{L}}
\def \an{\alpha_\mathrm{N}}
\def \au{\alpha_\upsilon}
\def \bl{\beta_\mathrm{L}}
\def \bn{\beta_\mathrm{N}}
\def \bu{\beta_\upsilon}
\def \bx{\beta_\xi}
\def \pu{\mathrm{P_u}}
\def \pum{\mathrm{P_u^M}}
\def \pcm{\mathrm{P_g^M}}
\def \pc{\mathrm{P_g}}
\def \pcl{\mathrm{P_g^L}}
\def \pcn{\mathrm{P_g^N}}
\def \hu{\mathrm{h_u}}
\def \htx{h_\mathrm{t}}
\def \hrx{h_\mathrm{r}}
\def \hb{\mathrm{h_b}}
\def \hc{\mathrm{h_g}}
\def \pl{\mathcal{P}_\mathrm{L}}
\def \pn{\mathcal{P}_\mathrm{N}}
\def \pup{\mathcal{P}_\upsilon}
\def \px{\mathcal{P}_\xi}
\def \pluu{\mathcal{P}_\mathrm{L}^\mathrm{uu}}
\def \ml{m_\mathrm{L}}
\def \mn{m_\mathrm{N}}
\def \m{m_\upsilon}
\def \muu{\mathrm{m_{uu}}}
\def \muul{\mathrm{m_{uu}^L}}
\def \muun{\mathrm{m_{uu}^N}}
\def \mcb{\mathrm{m_{gb}}}
\def \mcbl{\mathrm{m_{gb}^L}}
\def \mcbn{\mathrm{m_{gb}^N}}
\def \mub{\mathrm{m_{ub}}}
\def \mubl{\mathrm{m_{ub}^L}}
\def \mubn{\mathrm{m_{ub}^N}}
\def \mcul{\mathrm{m_{gu}^L}}
\def \mcun{\mathrm{m_{gu}^N}}
\def \nl{n_\mathrm{L}}
\def \nn{n_\mathrm{N}}
\def \nx{n_\xi}
\def \n{n_\upsilon}
\def \sl{\psi_\mathrm{L}}
\def \sn{\psi_\mathrm{N}}
\def \su{\psi_\upsilon}
\def \sx{\psi_\xi}
\def \ol{\omega_\mathrm{L}}
\def \on{\omega_\mathrm{N}}
\def \ou{\omega_\upsilon}
\def \ox{\omega_\xi}
\def \il{I_\mathrm{L}}
\def \iN{I_\mathrm{N}}
\def \yl{y_\mathrm{L}}
\def \yn{y_\mathrm{N}}
\def \yu{y_\upsilon}
\def \lapi{\mathcal{L}_I}
\def \lapil{\mathcal{L}_{I_\mathrm{L}}}
\def \lapin{\mathcal{L}_{I_\mathrm{N}}}
\def \lapix{\mathcal{L}_{I_\xi}}
\def \lapic{\mathcal{L}_{I_\mathrm{g}}}
\def \lapiu{\mathcal{L}_{I_\mathrm{u}}}
\def \Ul{\Upsilon_\mathrm{L}}
\def \Un{\Upsilon_\mathrm{N}}
\def \Ux{\Upsilon_\xi}
\def \Uxu{\Upsilon_\xi^{\,\upsilon}}
\def \Unu{\Upsilon_\mathrm{N}^\upsilon}
\def \frl{f_{R_\mathrm{b}}^\mathrm{L}}
\def \frn{f_{R_\mathrm{b}}^\mathrm{N}}
\def \frx{f_{R_\mathrm{b}}^{\,\xi}}
\def \laml{\lambda_\mathrm{L}}
\def \lamn{\lambda_\mathrm{N}}

\def \t{\mathrm{T}}
\def \i{I}
\def \ic{I_\mathrm{g}}
\def \icl{I_\mathrm{g}^\mathrm{L}}
\def \icn{I_\mathrm{g}^\mathrm{N}}
\def \iu{I_\mathrm{u}}
\def \iul{I_\mathrm{u}^\mathrm{L}}
\def \iun{I_\mathrm{u}^\mathrm{N}}
\def \ful{\Phi_\mathrm{u}^\mathrm{L}}
\def \fun{\Phi_\mathrm{u}^\mathrm{N}}
\def \fu{\Phi_\mathrm{u}}
\def \fcl{\Phi_\mathrm{g}^\mathrm{L}}
\def \fcN{\Phi_\mathrm{g}^\mathrm{N}}
\def \fc{\Phi_\mathrm{g}}

\def \gb{\mathrm{G_b}}
\def \gu{\mathrm{G_{u}}}
\def \gb{\mathrm{G_b}}
\def \tt{\theta_\mathrm{t}}
\def \tb{\theta_\mathrm{b}}

\def \lamci{\hat{\lambda}_\mathrm{g}}
\def \lamb{\lambda_\mathrm{b}}
\def \lamu{\lambda_\mathrm{u}}
\def \lamg{\lambda_\mathrm{g}}
\def \lamul{\lambda_\mathrm{u}^\mathrm{L}}
\def \lamun{\lambda_\mathrm{u}^\mathrm{N}}
\def \lamcl{\lambda_\mathrm{g}^\mathrm{L}}

\def \iccl{I_\mathrm{gg}^\mathrm{L}}
\def \iccn{I_\mathrm{gg}^\mathrm{N}}
\def \icc{I_\mathrm{gg}}
\def \icu{I_\mathrm{gu}}
\def \iuc{I_\mathrm{ug}}
\def \iuu{I_\mathrm{uu}}
\def \iuul{I_\mathrm{uu}^\mathrm{L}}
\def \iuun{I_\mathrm{uu}^\mathrm{N}}
\def \icul{I_\mathrm{gu}^\mathrm{L}}
\def \icun{I_\mathrm{gu}^\mathrm{N}}
\def \iucl{I_\mathrm{ug}^\mathrm{L}}
\def \iucn{I_\mathrm{ug}^\mathrm{N}}
\def \lapiccl{\mathcal{L}_{\iccl}}
\def \lapiccn{\mathcal{L}_{\iccn}}
\def \lapicul{\mathcal{L}_{\icul}}
\def \lapicun{\mathcal{L}_{\icun}}
\def \lapicu{\mathcal{L}_{\icu}}
\def \lapiuc{\mathcal{L}_{\iuc}}
\def \lapicc{\mathcal{L}_{\icc}}
\def \lapiucl{\mathcal{L}_{\iucl}}
\def \lapiucn{\mathcal{L}_{\iucn}}
\def \lapiuu{\mathcal{L}_{\iuu}}
\def \lapiuul{\mathcal{L}_{\iuul}}
\def \lapiuun{\mathcal{L}_{\iuun}}
\def \pcb{\mathsf{p}_\mathrm{gb}}
\def \pub{\mathsf{p}_\mathrm{ub}}
\def \pcu{\mathsf{p}_\mathrm{gu}}
\def \pruu{\mathsf{p}_\mathrm{uu}}
\def \prcb{\mathsf{p}_\mathrm{gb}}
\def \fici{\hat{\Phi}_c}
\def \tl{\tau_\mathrm{L}}
\def \tn{\tau_\mathrm{N}}
\def \acbl{\alpha_{\mathrm{gb}}^\mathrm{L}}
\def \acbn{\alpha_{\mathrm{gb}}^\mathrm{N}}
\def \acul{\alpha_{\mathrm{gu}}^\mathrm{L}}
\def \acun{\alpha_{\mathrm{gu}}^\mathrm{N}}
\def \acb{\alpha_{\mathrm{gb}}}
\def \alcb{\alpha_{\mathrm{gb}}^\mathrm{L}}
\def \ancb{\alpha_{\mathrm{gb}}^\mathrm{N}}
\def \aubl{\alpha_{\mathrm{ub}}^\mathrm{L}}
\def \aubn{\alpha_{\mathrm{ub}}^\mathrm{N}}
\def \auu{\alpha_{\mathrm{uu}}}
\def \auul{\alpha_{\mathrm{uu}}^\mathrm{L}}
\def \auun{\alpha_{\mathrm{uu}}^\mathrm{N}}
\def \zl{\zeta_\mathrm{L}}
\def \zn{\zeta_\mathrm{N}}
\def \ec{\epsilon_\mathrm{g}}
\def \eu{\epsilon_\mathrm{u}}
\def \subl{\psi_\mathrm{ub}^\mathrm{L}}
\def \scbl{\psi_\mathrm{gb}^\mathrm{L}}
\def \scbn{\psi_\mathrm{gb}^\mathrm{N}}
\def \scul{\psi_\mathrm{gu}^\mathrm{L}}
\def \suu{\psi_\mathrm{uu}}
\def \suul{\psi_\mathrm{uu}^\mathrm{L}}
\def \suun{\psi_\mathrm{uu}^\mathrm{N}}
\def \zubl{\zeta_\mathrm{ub}^\mathrm{L}}
\def \zubn{\zeta_\mathrm{ub}^\mathrm{N}}
\def \zcul{\zeta_\mathrm{cu}^\mathrm{L}}
\def \zuu{\zeta_\mathrm{uu}}
\def \zuul{\zeta_\mathrm{uu}^\mathrm{L}}
\def \zuun{\zeta_\mathrm{uu}^\mathrm{N}}
\def \zcb{\zeta_\mathrm{gb}}
\def \zcbl{\zeta_\mathrm{gb}^\mathrm{L}}
\def \zcbn{\zeta_\mathrm{gb}^\mathrm{N}}
\def \pur{\rho_\mathrm{u}}
\def \pcr{\rho_\mathrm{g}}
\def \guu{\mathrm{g_{uu}}}
\def \gcu{\mathrm{g_{gu}}}
\def \gub{\mathrm{g_{ub}}}
\def \gubi{\mathrm{g_{ub}^{(i)}}}
\def \gcb{\mathrm{g_{gb}}}
\def \scb{\psi_\mathrm{gb}}
\def \hub{\mathrm{h_{ub}}}
\def \hcb{\mathrm{h_{gb}}}
\def \hcu{\mathrm{h_{gu}}}
\def \dub{d_\mathrm{{ub}}}
\def \dcb{d_\mathrm{{gb}}}
\def \dcu{d_\mathrm{{gu}}}
\def \publ{\Psi_\mathrm{ub}^\mathrm{L}}
\def \pcul{\Psi_\mathrm{gu}^\mathrm{L}}
\def \pcb{\Psi_\mathrm{gb}}
\def \pcbl{\Psi_\mathrm{gb}^\mathrm{L}}
\def \pcbn{\Psi_\mathrm{gb}^\mathrm{N}}
\def \puu{\Psi_\mathrm{uu}}
\def \puul{\Psi_\mathrm{uu}^\mathrm{L}}
\def \puun{\Psi_\mathrm{uu}^\mathrm{N}}
\def \buu{\beta_\mathrm{uu}}
\def \bcb{\beta_\mathrm{gb}}
\def \bubl{\beta_\mathrm{ub}^\mathrm{L}}

\def \hb{\mathrm{h}_{\mathrm{b}}}
\def \hu{\mathrm{h}_{\mathrm{u}}}
\def \hg{\mathrm{h}_{\mathrm{g}}}
\def \hx{\mathrm{h}_{\mathrm{x}}}
\def \hy{\mathrm{h}_{\mathrm{y}}}
\def \hxy{\mathrm{h}_{\mathrm{xy}}}
\def \x{\mathrm{x}}
\def \y{\mathrm{y}}
\def \zetaxy{\zeta_{\mathrm{xy}}}
\def \tauxy{\tau_{\mathrm{xy}}}
\def \tauoxy{\tau_{0,\mathrm{xy}}}
\def \gxy{g_{\mathrm{xy}}}
\def \psixy{\psi_{\mathrm{xy}}}
\def \psixyL{\psixy^{\mathrm{L}}}
\def \psixyN{\psixy^{\mathrm{N}}}
\def \L{\mathrm{L}}
\def \N{\mathrm{N}}
\def \a{\mathrm{a}}
\def \x{\mathrm{x}}
\def \y{\mathrm{y}}
\def \dxy{d_\mathrm{xy}}
\def \rxy{r_\mathrm{xy}}
\def \alphaxy{\alpha_\mathrm{xy}}
\def \alphaxyL{\alphaxy^\mathrm{L}}
\def \alphaxyN{\alphaxy^\mathrm{N}}
\def \Px{P_{\mathrm{x}}}
\def \Pxmax{P_{\mathrm{x}}^{\textrm{max}}}
\def \rhox{\rho_{\mathrm{x}}}
\def \epsx{\epsilon_{\mathrm{x}}}
\def \mxy{\mathrm{m}_{\mathrm{xy}}}
\def \mxyL{\mxy^{\mathrm{L}}}
\def \mxyN{\mxy^{\mathrm{N}}}
\def \Ixy{I_{\mathrm{xy}}}
\def \Phib{\Phi_{\mathrm{b}}}
\def \Phig{\Phi_{\mathrm{g}}}
\def \Phiu{\Phi_{\mathrm{u}}}
\def \Phihatg{\hat{\Phi}_{\mathrm{g}}}
\def \Phihatgb{\hat{\Phi}_{\mathrm{gb}}}
\def \Phihatgu{\hat{\Phi}_{\mathrm{gu}}}


\def \pcovuav{\mathcal{C}_\mathrm{u}}
\def \pcovc{\mathcal{C}_\mathrm{g}}
\def \rMuu{\mathrm{r_M}}
\def \ru{\mathrm{r_u}}
\def \Ru{R_\mathrm{u}}
\def \u{\mathrm{u}}
\def \c{\mathrm{g}}
\def \q{\mathrm{q}}
\def \r{\mathrm{r}}
\def \m{\mathrm{m}}
\def \prxy{\mathsf{p}_\mathrm{xy}}
\def \prxyxi{\mathsf{p}_\mathrm{xy}^\xi}
\def \pl{\mathcal{P}^\mathrm{L}}
\def \pn{\mathcal{P}^\mathrm{N}}
\def \rc{\mathrm{r_g}}
\def \d{\mathrm{d}}
\def \L{\mathrm{L}}
\def \N{\mathrm{N}}
\def \pu{P_\mathrm{u}}
\def \pumax{\mathrm{P_u^{max}}}
\def \lapixyxi{\mathcal{L}_{I_\mathrm{xy}^\xi}}
\def \px{P_\mathrm{x}}
\def \sy{\mathrm{s_y}}
\def \sixy{\psi_\mathrm{xy}}
\def \zetxy{\zeta_\mathrm{xy}}
\def \prxi{\mathcal{P}^\xi}
\def \s{\mathrm{s}}
\def \axy{\alpha_\mathrm{xy}}
\def \bxy{\beta_\mathrm{xy}}
\def \hxy{\mathrm{h_{xy}}}
\def \Rc{R_\mathrm{g}}
\def \pc{P_\mathrm{g}}
\def \lamc{\lambda_\c}
\def \D{\mathrm{D}}
\def \sigmau{\sigma_\mathrm{u}}
\def \g{\mathrm{g}}


\newtheorem{Theorem}{\bf Theorem}
\newtheorem{Corollary}{\bf Corollary}
\newtheorem{Remark}{\bf Remark}
\newtheorem{Lemma}{\bf Lemma}
\newtheorem{Proposition}{\bf Proposition}
\newtheorem{Assumption}{\bf Assumption}
\newtheorem{Approximation}{\bf Approximation}
\newtheorem{Definition}{\bf Definition}

\title{Cellular UAV-to-UAV Communications}
\author{\IEEEauthorblockN{M.~Mahdi~Azari$^{\star}$, Giovanni Geraci$^{\diamond}$, Adrian Garcia-Rodriguez$^{\dag}$, and Sofie~Pollin$^{\star}$}\\ \vspace{-0.3cm}
\normalsize\IEEEauthorblockA{\emph{$^{\star}$KU Leuven, Belgium \enspace $^{\diamond}$Universitat Pompeu Fabra, Spain \enspace $^{\dag}$Nokia Bell Labs, Ireland}}
\thanks{
The work of G.~Geraci was supported by the Postdoctoral Junior Leader Fellowship Programme from ``la Caixa" Banking Foundation.}
}
\maketitle
\thispagestyle{empty}
\begin{abstract}
Reliable and direct communication between unmanned aerial vehicles (UAVs) could facilitate autonomous flight, collision avoidance, and cooperation in UAV swarms. In this paper, we consider UAV-to-UAV (U2U) communications underlaying a cellular network, where UAV transmit-receive pairs share the same spectrum with the uplink (UL) of cellular ground users (GUEs).
We evaluate the performance of this setup through an analytical framework that embraces realistic height-dependent channel models, antenna patterns, and practical power control mechanisms. Our results demonstrate that, although the presence of U2U communications may worsen the performance of the GUEs, such effect is limited as base stations receive UAV interference through their antenna sidelobes. Moreover, we illustrate that the quality of all links degrades as the UAV height increases---due to a larger number of line-of-sight interferers---, and how the performance of the U2U links can be traded off against that of the GUEs by varying the UAV power control policy. 
\end{abstract}

\IEEEpeerreviewmaketitle
\section{Introduction}
\label{sec:Intro}

The telecommunications industry and academia have long agreed on the social benefits that can be brought by having cellular-connected unmanned aerial vehicles (UAVs) \cite{Qualcomm2017,GarGerLop2018,MozSaaBen2018}. These include facilitating search-and-rescue missions, acting as mobile small cells for providing coverage and capacity enhancements \cite{azari2017ultra}, and even automating logistics in indoor warehouses \cite{Cbinsights:19}. From a business standpoint, mobile network operators are chasing new revenue opportunities by offering cellular coverage to a heterogeneous population of terrestrial and aerial users \cite{Ericsson:18,YanLinLi18}. A certain consensus has been reached---both at 3GPP meetings and in the classroom---on the fact that present-day networks will be able to support cellular-connected UAVs up to a certain extent \cite{azari2018cellular,fotouhi2018survey,LopDinLi2018GC,NguAmoWig2018,azari2017coexistence}. Besides, recent studies have shown that 5G-and-beyond hardware and software upgrades will be required by both mobile operators and UAV manufacturers to target large populations of UAVs flying at high altitudes \cite{GerGarGal2018,DanGarGerICC2019}.

However, important use-cases exist where direct communication between UAVs, bypassing ground network infrastructure, would be a key enabler. These include autonomous flight of UAV swarms, collision avoidance, and UAV-to-UAV relaying, data transfer, and gathering. Similarly to ground device-to-device (D2D) communications \cite{LinAndGho2014,ChuCotDhi2017}, UAV-to-UAV (U2U) communications may also bring benefits in terms of spectral and energy efficiencies, extended cellular coverage, and reduced backhaul demands \cite{zeng2019accessing,ZhaZhaDi19}.

In this article, we investigate U2U communications underlaying a cellular network. In such a setup, UAV-to-UAV transmit-receive pairs share the same spectrum with the uplink (UL) of cellular ground users (GUEs). Through stochastic geometry tools, we explicitly characterize the performance of both U2U and GUE UL, as well as their interplay. To the best of our knowledge, this work is the first one to do so by accounting for: (i) a realistic propagation channel model that depends on the UAV altitude, (ii) the impact of a practical base station (BS) antenna pattern, and (iii) a fractional power control policy implemented by all nodes. Our takeaways can be summarized as follows:
\begin{itemize}[leftmargin=*]
\item The presence of U2U links may degrade the GUE UL. However, such performance loss is not dramatic, since BSs perceive interfering UAVs through their antenna sidelobes, and UAVs can generally transmit at low power thanks to the favorable U2U channel conditions.
\item The performance of both U2U and GUE UL links degrades as UAVs fly higher. This is due to an increased probability of line-of-sight (LoS)---and hence interference---on all UAV-to-UAV, GUE-to-UAV, and UAV-to-BS interfering links. This negative effect outweighs the benefits brought by having larger GUE-to-UAV and UAV-to-BS distances.
\item The UAV power control policy has a significant impact on all links. A tradeoff exists between the performance of U2U and UL GUE communications, whereby increasing the UAV transmission power improves the former at the expense of the latter.
\item Smaller U2U distances can improve the performance of both U2U and GUE UL. Indeed, owed to a better U2U path loss, UAVs may employ a smaller transmission power and therefore reduce the interference they cause to other U2U links and to GUEs.
\end{itemize}


\section{System Model}
\label{sec:System_Model}

In this section, we introduce the network topology, channel model, and power control mechanisms considered throughout the paper. 
The main parameters used in our study are given in Table~\ref{table:parameters}.

\subsection{Network Topology and Spectrum Sharing Mechanism}

\subsubsection*{Ground cellular network}
We consider the UL of a traditional ground cellular network as depicted in Fig.~\ref{U2U_SystemModel}, where BSs are uniformly distributed as a Poisson point process (PPP) $\Phib \in \mathbb{R}^2$ with density $\lamb$. All BSs are deployed at a height $\hb$, and communicate with their respective sets of connected GUEs. Assuming that the number of GUEs is sufficiently large when compared to that of the BSs, the active GUEs on each time-frequency physical resource block (PRB) form an independent Poisson point process $\Phig \in \mathbb{R}^2$ with density $\lamg=\lamb$ \cite{ChuCotDhi2017}. We further consider that GUEs associate to their closest BS, which generally also provides the largest reference signal received power (RSRP)\footnote{A GUE may connect to a BS $b$ other than the closest one $a$ if its link is in LoS with $b$ and not with $a$. However, since the probability of LoS decreases with the distance, such event is unlikely to occur \cite{3GPP36777}.}. Therefore, the 2-D distance between a GUE and its associated BS follows a Rayleigh distribution with a scale parameter given by $\sigma_\c = 1/\sqrt{2\pi \lamc}$. When focusing on a typical BS serving its associated GUE, the interfering GUEs form a non-homogeneous PPP with density $\lamci(r) = \lamb(1-e^{-\lamb \pi r^2})$, where $r$ is the 2-D distance between the interfering GUE and the typical BS \cite{ChuCotDhi2017,SinZhaAnd:15,YanGerQue:16}.

\subsubsection*{UAV-to-UAV communications}
As illustrated in Fig.~\ref{U2U_SystemModel}, in this work we also consider that U2U transmit-receive pairs reuse the cellular GUE UL spectrum. We assume that U2U transmitters form a PPP $\Phiu$ with intensity $\lamu$, and that each U2U receiver is randomly and independently placed around its associated transmitter with distance $\Ru$ distributed as $f_{\Ru}(\ru)$.
While our analysis holds for any transmit/receive UAV height, in the following we assume all UAVs to be located at the same height $\hu$, to evaluate the impact of such parameter.

\subsubsection*{Spectrum sharing} We consider an underlay in-band approach for resource sharing between GUE UL and U2U \cite{LinAndGho2014}, where each PRB may be used by both link types. This results in four types of links: (i) GUE-to-BS communication and/or interfering links, (ii) UAV-to-BS interfering links, (iii) UAV-to-UAV communication and/or interfering links, and (iv) GUE-to-UAV interfering links.

\subsection{Propagation Channel and Power Control}

We assume that any radio link between nodes $\x$ and $\y$ is affected by large-scale fading $\zetaxy$ (comprising path loss $\tauxy$ and antenna gain $\gxy$) and small-scale fading $\psixy$.

\subsubsection*{Probability of LoS}
We consider that links experience line-of-sight (LoS) and non-LoS (NLoS) propagation conditions with probabilities $\prxy^\L$ and $\prxy^\N$, respectively. In what follows, the superscripts $\mathrm{L}$ and $\mathrm{N}$ will denote system parameters under LoS and NLoS conditions, respectively. In our analysis we assume that $\prxy^\L$ is or can be approximated by a step function, i.e., $\prxy^\L$ is constant for an interval $[\r_i,\r_{i+1}]$, where $i = 1,2,\ldots$ and $0=\r_1 < \r_2 <\ldots$. 

\subsubsection*{Path loss}
The distance-dependent path loss between two nodes $\mathrm{x}$ and $\mathrm{y}$ is given by
\begin{equation}
\tauxy = \hat{\tau}_\mathrm{xy} \, \dxy^{\,\alphaxy},
\end{equation}
where $\hat{\tau}_\mathrm{xy}$ denotes the reference path loss, $\alphaxy$ is the path loss exponent, and $\dxy = \sqrt{\rxy^2 + \hxy^2}$, $\rxy$, and $\hxy = \hx - \hy$ represent the 3-D distance, 2-D distance, and height difference between $\mathrm{x}$ and $\mathrm{y}$, respectively. Table~\ref{table:parameters} lists the path loss parameters employed in our study, which depend on the nature of $\x$ and $\y$. In the sequel, we employ the subscripts $\{\u,\c,\mathrm{b}\}$ to denote UAV, GUE, and BS nodes, respectively.

\begin{figure}
	\centering
	\includegraphics[width=0.95\figwidth]{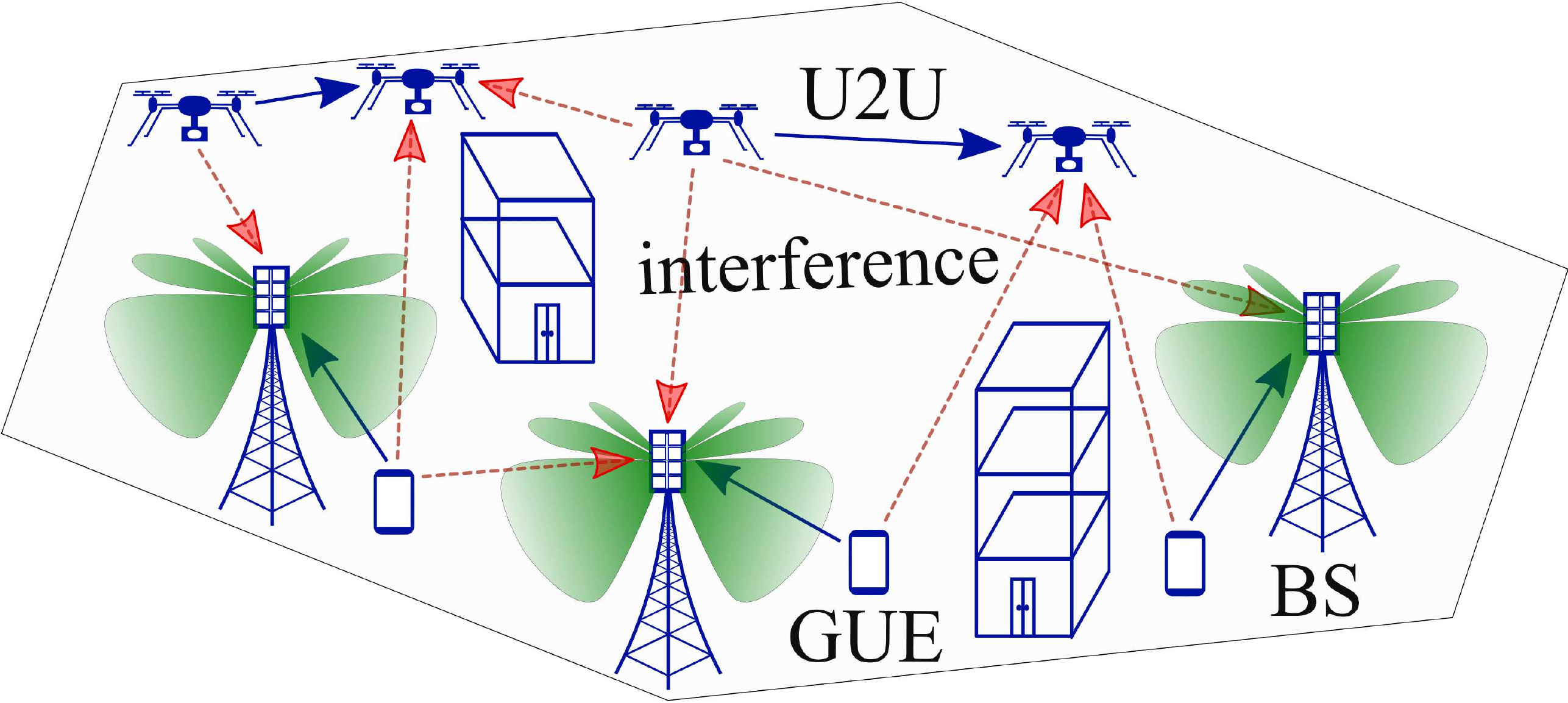}
	\caption{Illustration of U2U communications sharing spectrum with the uplink of a cellular network. Blue solid arrows indicate communication links, whereas red dashed arrow indicate interfering links.}
	\label{U2U_SystemModel}
\end{figure}
\subsubsection*{Antenna gain}
We assume that all GUEs and UAVs are equipped with a single omnidirectional antenna with unitary gain. On the other hand, we consider a realistic BS antenna radiation pattern to capture the effect of sidelobes, which is of particular importance in UAV-to-BS links \cite{azari2017coexistence,GerGarGal2018}. We assume that each BS is equipped with a vertical, $\mathrm{N}$-element uniform linear array (ULA), where each element has directivity
\begin{equation} \label{ElementGain}
g_E(\theta) = g_E^{\max} \sin^2\theta
\end{equation}
as a function of the zenith angle $\theta$. The total BS radiation pattern $g_b(\theta) = g_E(\theta)\cdot g_A(\theta)$ is obtained as the superposition of each element's radiation pattern $g_E(\theta)$ and by accounting for the array factor given by
\begin{equation}
g_A(\theta) = \frac{\sin^2\Big(N\pi (\cos\theta - \cos\tt)/2\Big)}{N\sin^2\Big(\pi (\cos\theta - \cos\tt)/2\Big)},	
\end{equation}
where $\tt$ denotes the electrical downtilt angle.
The total antenna gain $\gxy$ between a pair of nodes $\x$ and $\y$ is given by the product of their respective antenna gains.

\subsubsection*{Small-scale fading}
On a given PRB, $\sixy$ denotes the small-scale fading power between nodes $\x$ and $\y$. Given the different propagation features of ground-to-ground, air-to-air, and air-to-ground links, we adopt the general Nakagami-m small-scale fading model. As a result, the cumulative distribution function (CDF) of $\sixy$ is given by
\begin{equation} \label{FadingCDF}
F_{\sixy}(\omega) \triangleq \mathbb{P}[\sixy < \omega] \!=\! 1\!-\!\sum_{i=0}^{\mxy-1} \!\frac{(\mxy \omega)^i}{i!} e^{-\mxy \omega},\!
\end{equation}
where $\mxy \in \mathbb{Z}^{+}$ is the fading parameter, with LoS links typically exhibiting a larger value of $\mxy$ than NLoS links.

\subsubsection*{Power Control}
As per the 3GPP guidelines, we consider fractional power control for all nodes. Accordingly, the power transmitted by node $\x$ while communicating to node $\y$ is given by \cite{BarGalGar2018GC}
\begin{equation}
\Px = \min\left\{ \Pxmax, \rhox \cdot \zetaxy^{\epsx} \right\},
\label{eqn:power_control}
\end{equation}
where $\Pxmax$ is the maximum transmit power at node $\x$, $\rhox$ is a cell-specific parameter, $\epsx \in [0,1]$ is the fractional power control factor, and $\zetaxy = \tauxy /\gxy$ is the large-scale fading between nodes $\x$ and $\y$. The aim of (\ref{eqn:power_control}) is to compensate for a fraction $\epsx$ of the large-scale fading, up to a limit imposed by $\Pxmax$ \cite{3GPP36777}.

\section{Analytical Results}
\label{sec:analysis_underlay}

Our U2U (resp. GUE UL) performance analysis is conducted for a typical BS (resp. UAV) receiver located at the origin.
In what follows, uppercase and lowercase are employed to respectively denote random variables and their realizations, e.g., $\Ru$ and $\ru$. Throughout the derivations, we make use of the superscripts $\nu,\xi \in \{\mathrm{L},\mathrm{N}\}$ to denote LoS and NLoS conditions on a certain link.

\subsection{U2U Performance Analysis}
We now derive the U2U link coverage, i.e., the complementary CDF (CCDF) of the signal-to-interference-plus-noise ratio (SINR) experienced by a UAV.

\begin{Theorem} \label{U2Ucov_theorem}
	The U2U link coverage is given by
	\begin{align}
	\pcovuav \!= \!\!\!\!\! \sum_{\nu \in \{\mathrm{L},\mathrm{N}\}}\int_0^\infty \hspace{-0.2cm} f_{\Ru}^\nu(\ru) \!\!\sum_{i=0}^{\mathrm{m_{uu}^\nu}-1} \!\!(-1)^i \q_{\u,i}^\nu \cdot \!\D^i_{\s_\u} \!\left[\lapiu^\nu(\mathrm{s_u})\right] \! \mathrm{d}\ru,
	\label{eq:U2ULinkCoverage}
	\end{align}
	where $\D^i_{\s_\u}$ represents the i-th derivative with respect to $\s_\u$ and $\Ru$ is the typical U2U communication link distance. Also, by denoting the noise power with $\mathrm{N_0}$, we have
	\begin{equation}
	\begin{aligned}
	\q_{\u,i}^\nu \!\triangleq\! \frac{e^{-\mathrm{N_0}\mathrm{s_u}}}{i!}\sum_{j=i}^{\mathrm{m_{uu}^\nu}-1} \frac{\mathrm{N_0}^{j-i}\mathrm{s_u}^j}{(j-i)!},~~
	\!\!\mathrm{s_u} \!\triangleq\! \frac{\mathrm{m_{uu}^\nu} \t}{\pu^\nu(\ru) \zuu^\nu(\ru)^{-1}}.
	\end{aligned}
	\end{equation}
	In \eqref{eq:U2ULinkCoverage}, the interference is characterized by its Laplacian, which is obtained as $\lapiu^\nu(\mathrm{s_u}) = e^{\eta(\mathrm{s_u})}$ with
	\begin{equation}
	\eta(\mathrm{s_u}) \!=\! -2 \pi\!\left[ \lamu \!\!\!\!\sum_{\xi \in \{\mathrm{L},\mathrm{N}\}}\!\!\mathcal{I}_\mathrm{uu}^\mathrm{\xi}(\mathrm{s_u})  \!+\! \lamb \!\!\!\!\sum_{\xi \in \{\mathrm{L},\mathrm{N}\}}\mathcal{I}_\mathrm{cu}^\mathrm{\xi}(\mathrm{s_u}) \right]\!,
	\end{equation}
		where for $\xi \in \{\mathrm{L},\mathrm{N}\}$
	\begin{align}
	\mathcal{I}_\mathrm{xy}^\xi &= \int_0^\infty \!\hspace{-0.2cm} f_{R_\mathrm{x}}^\mathrm{L}(x)\sum_{i = 1}^\infty \left[\prxy^{\xi}(\r_{i-1})\!-\!\prxy^{\xi}(\r_{i})\right] \underbrace{\Psi_\mathrm{xy}^\xi\left(\mathrm{s},\r_{i}\right)}_\text{at $P_\x = P_\mathrm{x}^\mathrm{L}$}  \!\mathrm{d}x\nonumber
	\\ 
	&\!\!\!\!\!\!\!\!+\! \int_0^\infty \!\hspace{-0.2cm} f_{R_\mathrm{x}}^\mathrm{N}(x) \sum_{i = 1}^\infty \!\left[\prxy^{\xi}(\r_{i-1})\!-\!\prxy^{\xi}(\r_{i})\right] \!\underbrace{\Psi_\mathrm{xy}^\xi\left(\mathrm{s},\r_{i}\right)}_\text{at $P_\x = P_\mathrm{x}^\mathrm{N}$}  \!\mathrm{d}x.\label{Ixyxi}
	\end{align}
	
	In \eqref{Ixyxi}, $\prxy^{\xi}(\r_0) \triangleq 0$, and
	\begin{equation}
	\begin{aligned}
	\Psi_\mathrm{xy}^\xi(\mathrm{s},\r) &\triangleq \frac{\r^2+\mathrm{h_{xy}^2}}{2}\left[1-\left(\frac{\m}{\m+\mu}\right)^{\m}\right] \\
	&\!\!\!\!\!\!\!- \mathcal{K}(s)
	\,{_2}F_1\left(1+\m,1-\beta;2-\beta;-\frac{\mu}{\m}\right),
	\end{aligned}
	\end{equation}
	where ${_2}F_1(\cdot)$ is the Gauss hypergeometric function, $\m = \mathrm{m_{xy}^\xi}$, $\beta = \frac{2}{\alpha_\mathrm{xy}^\xi}$, $\mathrm{s} = \mathrm{s_y}\frac{\mathrm{g_{xy}}}{\hat{\tau}_{\mathrm{xy}}^\xi}$, and
	\begin{align}
	\mu(\s) \!\triangleq\! \frac{\mathrm{s} P_\mathrm{x}}{(\r^2\!+\!\mathrm{h_{xy}^2})^{1/\beta}},\!\!~\mathcal{K}(s) \!\triangleq\! \frac{\mathrm{s} P_\mathrm{x}}{2(1\!-\!\beta)\,(\r^2\!+\!\mathrm{h_{xy}^2})^{1/\beta-1}}.
	\end{align}
\end{Theorem}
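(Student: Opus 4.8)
The plan is to condition on the desired U2U link --- its $2$-D length $\Ru=\ru$ and its LoS/NLoS state $\nu$, whose joint density is $f_{\Ru}^\nu(\ru)$ --- and to exploit that, conditionally, the desired-link fading $\psi_\mathrm{uu}$ is Gamma-distributed with integer shape $\mathrm{m_{uu}^\nu}$ and unit mean (this is \eqref{FadingCDF} specialized to the desired link) and is independent of the aggregate interference $I_\mathrm{u}$. The received signal power is then $\pu^\nu(\ru)\,\zuu^\nu(\ru)^{-1}\psi_\mathrm{uu}$, so $\{\sinr>\t\}$ is $\{\psi_\mathrm{uu}>\mathrm{s_u}(\mathrm{N_0}+I_\mathrm{u})/\mathrm{m_{uu}^\nu}\}$ with $\mathrm{s_u}$ as in the statement. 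Using the Gamma CCDF $\mathbb{P}[\psi_\mathrm{uu}>y]=\sum_{i=0}^{\mathrm{m_{uu}^\nu}-1}\frac{(\mathrm{m_{uu}^\nu}y)^i}{i!}e^{-\mathrm{m_{uu}^\nu}y}$ and taking expectation over $I_\mathrm{u}$ leaves a finite sum of terms $\frac{\mathrm{s_u}^i}{i!}\mathbb{E}[(\mathrm{N_0}+I_\mathrm{u})^i e^{-\mathrm{s_u}(\mathrm{N_0}+I_\mathrm{u})}]$. Expanding $(\mathrm{N_0}+I_\mathrm{u})^i$ by the binomial theorem, factoring out $e^{-\mathrm{s_u}\mathrm{N_0}}$, and using $\mathbb{E}[I_\mathrm{u}^k e^{-\mathrm{s_u}I_\mathrm{u}}]=(-1)^k\D^k_{\mathrm{s_u}}[\lapiu^\nu(\mathrm{s_u})]$ turns this into a double sum; swapping the order of summation so the outer index is the derivative order, and collapsing the binomial coefficients, collects exactly the weight $\q_{\u,i}^\nu$. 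Averaging over $(\ru,\nu)$ against $f_{\Ru}^\nu$ then gives the expression for $\pcovuav$ in \eqref{eq:U2ULinkCoverage}.

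The core step is computing $\lapiu^\nu(\mathrm{s_u})=\mathbb{E}[e^{-\mathrm{s_u}I_\mathrm{u}}]$. Since the interfering U2U transmitters $\Phiu$ (density $\lamu$) and the interfering GUEs $\Phig$ (density $\lamb$) are independent PPPs, the transform factors over the two classes, and for each I would invoke the PPP probability generating functional $\mathbb{E}[\prod_j f(v_j)]=\exp(-\lambda\int_{\mathbb{R}^2}(1-f(v))\,\mathrm{d}v)$, giving $\lapiu^\nu=e^{\eta(\mathrm{s_u})}$ with $\eta$ the stated $-2\pi[\lamu(\cdot)+\lamb(\cdot)]$ combination of radial integrals. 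For a generic interferer of class $\mathrm{x}$ at $2$-D distance $v$ from the typical receiver, $f(v)=\mathbb{E}[e^{-\mathrm{s_u}P_\mathrm{x}\psi_\mathrm{xy}\zeta_\mathrm{xy}^{-1}}]$, and I would evaluate it by: (i) conditioning on the LoS/NLoS state $\xi$ of the interfering link, whose probability is piecewise constant in $v$ by assumption; (ii) conditioning on the distance $x$ from the interferer to the node it serves, which through \eqref{eqn:power_control} fixes its transmit power at $P_\mathrm{x}^\mathrm{L}$ or $P_\mathrm{x}^\mathrm{N}$ --- hence the outer integrals $\int f_{R_\mathrm{x}}^\mathrm{L}(x)\,\mathrm{d}x$, $\int f_{R_\mathrm{x}}^\mathrm{N}(x)\,\mathrm{d}x$ and the two ``$P_\mathrm{x}=\cdot$'' branches in \eqref{Ixyxi}; and (iii) averaging the interfering-link Nakagami fading via $\mathbb{E}[e^{-t\psi_\mathrm{xy}}]=(1+t/\mathrm{m_{xy}^\xi})^{-\mathrm{m_{xy}^\xi}}$. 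The reference path loss $\hat\tau_\mathrm{xy}^\xi$ and the (unit) omnidirectional antenna gains are absorbed into the scalar $\mathrm{s}$ of the statement.

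What then remains is a radial integral $\int_0^\infty(1-(1+\mathrm{s}P_\mathrm{x}(v^2+\mathrm{h_{xy}^2})^{-1/\beta}/\m)^{-\m})\,\prxy^\xi(v)\,v\,\mathrm{d}v$ with $\beta=2/\alpha_\mathrm{xy}^\xi$. Writing $\prxy^\xi$ as constant on each $[\r_{i-1},\r_i]$ and applying summation by parts to the resulting annular decomposition recasts it as $\sum_i[\prxy^\xi(\r_{i-1})-\prxy^\xi(\r_i)]\,\Psi_\mathrm{xy}^\xi(\mathrm{s},\r_i)$, where $\Psi_\mathrm{xy}^\xi(\mathrm{s},\r)$ is, up to an overall sign, the tail integral $\int_\r^\infty(\cdots)v\,\mathrm{d}v$ and the convention $\prxy^\xi(\r_0)\triangleq 0$ absorbs the boundary term at $\r_1$. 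Finally I would put $\Psi_\mathrm{xy}^\xi$ in closed form: the substitution $w=v^2+\mathrm{h_{xy}^2}$ followed by $z=w^{-1/\beta}$ reduces it to $\tfrac{\beta}{2}\int_0^{b}(1-(1+\mathrm{s}P_\mathrm{x}z/\m)^{-\m})\,z^{-\beta-1}\,\mathrm{d}z$ with $b=(\r^2+\mathrm{h_{xy}^2})^{-1/\beta}$; one integration by parts separates a term $\propto\tfrac{\r^2+\mathrm{h_{xy}^2}}{2}[1-(\m/(\m+\mu))^{\m}]$ (the boundary contribution, where $\mathrm{s}P_\mathrm{x}z/\m$ takes the value $\mu/\m$ with $\mu$ as defined) from the incomplete-Beta integral $\int_0^{b}z^{-\beta}(1+\mathrm{s}P_\mathrm{x}z/\m)^{-\m-1}\,\mathrm{d}z=\tfrac{b^{1-\beta}}{1-\beta}\,{_2}F_1(1+\m,1-\beta;2-\beta;-\mu/\m)$; after collecting the prefactors (noting $\mathrm{s}P_\mathrm{x}b^{1-\beta}/[2(1-\beta)]=\mathcal{K}(\mathrm{s})$) the two pieces combine into $\Psi_\mathrm{xy}^\xi$ as stated.

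The main obstacle I anticipate is the bookkeeping concentrated in the second and third paragraphs: keeping straight the two independent interferer classes with their distinct serving-link distance laws $f_{R_\mathrm{x}}$, the two independent LoS/NLoS randomizations (one on the interfering link, one on the interferer's own link that sets its power), and verifying that the interchange of $\mathbb{E}$ with the derivatives $\D^i_{\mathrm{s_u}}$, the term-by-term treatment of the infinite annular sum, and the hypergeometric identification are all legitimate --- the last requiring $\beta=2/\alpha_\mathrm{xy}^\xi<1$, i.e.\ path-loss exponents exceeding $2$, for the tail integrals to converge. The first paragraph, by contrast, is routine algebra once the Gamma CCDF is in hand.
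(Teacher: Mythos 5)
Your proposal is correct and follows essentially the same route as the paper's proof: conditioning on the serving link, converting the Gamma (Nakagami) CCDF into signed derivatives of the interference Laplace transform, applying the PPP probability generating functional separately to UAV and GUE interferers with power fixed by the interferer's own LoS/NLoS serving-link distance, exploiting the piecewise-constant LoS probability with summation by parts, and closing the annular integral via the hypergeometric function. The only difference is cosmetic: you average the Nakagami fading before the radial integration and identify an incomplete-Beta integral directly, whereas the paper integrates radially first (incomplete gamma functions) and then averages the fading, using a hypergeometric transformation to reach the same $\Psi_\mathrm{xy}^\xi$.
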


\begin{proof}
	See Appendix \ref{U2Ucov_proof}.
\end{proof}

\subsection{GUE UL Performance Analysis}
We now obtain the GUE UL coverage, i.e., the CCDF of the UL SINR experienced by a GUE in the presence of U2U communications sharing the same spectrum.

\begin{Theorem} \label{C2Bcov_theorem}
	The GUE UL coverage is given by	
	\begin{align}
	\pcovc &\!= \!\!\!\!\! \sum_{\nu \in \{\mathrm{L},\mathrm{N}\}}\int_0^\infty \!\!\! f_{R_\c}^\nu(\rc) \!\! \sum_{i=0}^{\mathrm{m_{cb}^\nu}-1} \!\!(-1)^i \q_{\c,i}^\nu \cdot \D^i_{\s_\c} \!\left[ \lapic^\nu(\s_\c)\right]  \mathrm{d}\rc,
	\label{eq:GUEULCoverage}
	\end{align}
	where $\Rc$ is the GUE communication link distance to the typical BS and
	\begin{align}
	\q_{\c,i}^\nu &\triangleq \frac{e^{-\s_\c\mathrm{N_0}}}{i!}\sum_{j=i}^{\mathrm{m_{cb}^\nu}-1} \frac{\mathrm{N_0}^{j-i}\s_\c^j}{(j-i)!},~~	\s_\c \triangleq \frac{\mathrm{m_{cb}^\nu} \t}{\pcr \zcb^\nu(\rc)^{-1+\ec}}.
	\end{align}
	 In \eqref{eq:GUEULCoverage}, the interference is characterized by its Laplacian, which is obtained as
	\begin{align}
	\lapic &= e^{ -2 \pi \lamu \sum_{\xi \in \{\mathrm{L},\mathrm{N}\}}\mathcal{I}_\mathrm{ug}^\mathrm{\xi} } \cdot e^{ -(2 \pi \lamb)^2 \sum_{\xi \in \{\mathrm{L},\mathrm{N}\}}\mathcal{I}_\mathrm{gg}^\mathrm{\xi} },
	\end{align}
	where $\mathcal{I}_\mathrm{ug}^\mathrm{\xi}$ is 
	\begin{align} \nonumber
	\mathcal{I}_\mathrm{ug}^\xi \!&= \!\!\! \int_0^\infty \!\hspace{-0.2cm} f_{\Ru}^\L(x) \!\sum_{i = 1}^\infty \pub^{\xi}(\r_i) \Big(\!\underbrace{\Psi_\mathrm{ub}^\xi\left(\mathrm{s},\r_{i+1}\right) \!-\! \Psi_\mathrm{ub}^\xi\left(\mathrm{s},\r_{i}\right)}_\text{at $P_\mathrm{u} = P_\mathrm{u}^\mathrm{L}$}\!\Big)  \mathrm{d}x
	\\
	&\!\!+\!\! \int_0^\infty \!\hspace{-0.2cm} f_{\Ru}^\mathrm{N}(x)\!\sum_{i = 1}^\infty \pub^{\xi}(\r_i) \Big(\!\underbrace{\Psi_\mathrm{ub}^\xi\left(\mathrm{s},\r_{i+1}\right) \!-\! \Psi_\mathrm{ub}^\xi\left(\mathrm{s},\r_{i}\right)}_\text{at $P_\mathrm{u} = P_\mathrm{u}^\mathrm{N}$}\!\Big)  \,\mathrm{d}x,
	\end{align}
	with $\mathrm{s} = \mathrm{s_{c}}\frac{\mathrm{g_{ub}(\r_i)}}{\hat{\tau}_{\mathrm{ub}}^\xi}$, whereas
	\begin{equation}
	\begin{aligned} \label{Iccxi}
	\mathcal{I}_\mathrm{gg}^\xi &\!\!=\!\! \int_0^\infty \!\prcb^\L(x) x e^{-\lamb \pi x^2} \\
	&\!\!\!\!\!\!\!\times \sum_{i = j(x)}^\mathrm{\infty} \!\!\prcb^{\xi}(\r_i)\, \!\Big(\!\underbrace{\Psi_\mathrm{gb}^\xi\left(\mathrm{s},\r_{i+1}\right) \!-\! \Psi_\mathrm{gb}^\xi\left(\mathrm{s},\r_{i}\right)}_{\text{at $P_{\g} = P_{\g}^\mathrm{L}$}} \!\Big) \mathrm{d}x \\
	&\!\!\!\!\!\!\!+ \int_0^\infty \prcb^\N(x) x e^{-\lamb \pi x^2} \\
	&\!\!\!\!\!\!\!\times \sum_{i = j(x)}^\mathrm{\infty} \prcb^{\xi}(\r_i)\, \Big(\underbrace{\Psi_\mathrm{gb}^\xi\left(\mathrm{s},\r_{i+1}\right) - \Psi_\mathrm{gb}^\xi\left(\mathrm{s},\r_{i}\right)}_{\text{at $P_\mathrm{g} = P_\mathrm{g}^\mathrm{N}$}} \Big) \mathrm{d}x,
	\end{aligned}
	\end{equation}
	with $\mathrm{s} = \mathrm{s_{g}}\frac{g_\mathrm{{gb}(\r_i)}}{\hat{\tau}_{\mathrm{gb}}^\xi}$. In \eqref{Iccxi}, we replace $r_{j(x)} = x$ where $j(x) = \lfloor{\frac{x\sqrt{\a_1\a_2}}{1000}\rfloor}+1$. 
\end{Theorem}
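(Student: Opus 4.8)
The plan is to follow the standard conditioning-and-Laplace-transform route of stochastic geometry, mirroring the proof of Theorem~\ref{U2Ucov_theorem} but with the typical receiver now a BS serving a GUE. First I would write $\pcovc = \mathbb{P}[\sinr > \t]$, where the numerator of the SINR is the useful power $\pc\,\zcb^{-1}\psi_\mathrm{gb}$ with $\psi_\mathrm{gb}$ a Nakagami variable of parameter $\mathrm{m_{gb}^\nu}$ ($\nu$ being the LoS state of the serving GUE--BS link), and the denominator is $\mathrm{N_0}$ plus the aggregate interference $I = \iuc + \icc$ from the UAV transmitters and from the other active GUEs. Conditioning on $\nu$ and on the serving 2-D length $\Rc = \rc$ --- which, by closest-BS association and the thinning by $\prcb^\nu$, has density $f_{R_\c}^\nu(\rc)$ --- the coverage event becomes $\{\psi_\mathrm{gb} > \s_\c(\mathrm{N_0}+I)\}$ with $\s_\c$ as in the statement. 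Using the Nakagami CCDF $\mathbb{P}[\psi_\mathrm{gb}\ge y] = \sum_{i=0}^{\mathrm{m_{gb}^\nu}-1}\frac{(\mathrm{m_{gb}^\nu} y)^i}{i!}e^{-\mathrm{m_{gb}^\nu} y}$ implied by \eqref{FadingCDF}, expanding $(\mathrm{N_0}+I)^i$ by the binomial theorem, and recognizing $\mathbb{E}[I^k e^{-\s_\c I}] = (-1)^k \D^k_{\s_\c}[\lapic(\s_\c)]$, the noise-dependent terms collect into the coefficient $\q_{\c,i}^\nu$ and one obtains \eqref{eq:GUEULCoverage}. This step is bookkeeping once $\lapic$ is known.

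The core is therefore the Laplace transform $\lapic(\s_\c) = \mathbb{E}[e^{-\s_\c I}]$. Since $\Phiu$ and $\Phig$ are independent, $\lapic = \lapiuc \cdot \lapicc$, and each factor follows from the PGFL of the corresponding point process. For $\lapiuc$ the UAV transmitters form a homogeneous PPP of intensity $\lamu$, so $\lapiuc = \exp\!\big(-\lamu\int_{\mathbb{R}^2}(1-\mathbb{E}[e^{-\s_\c \pu \zeta_\mathrm{ub}^{-1}\psi_\mathrm{ub}}])\,\mathrm{d}x\big)$. I would then take the expectations in the order: (i) over $\psi_\mathrm{ub}$, which by the Nakagami MGF turns $1-\mathbb{E}[e^{-\cdot}]$ into $1-(1+\mu/\m)^{-\m}$ for the relevant $\m,\mu$; (ii) over the LoS state $\xi$ of each interfering UAV--BS link, weighting the LoS/NLoS kernels by $\pub^\xi$; (iii) over the interfering UAV's own communication distance $\Ru$, whose law is $f_{\Ru}^\L$ or $f_{\Ru}^\N$ depending on whether that UAV's own link is LoS or NLoS, since this fixes $\pu\in\{P_\mathrm{u}^\L,P_\mathrm{u}^\N\}$ via \eqref{eqn:power_control}; and (iv) the radial integral over $\|x\|$. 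Because $\prxy^\L$ is piecewise constant on the intervals $[\r_i,\r_{i+1}]$, step (iv) splits into a sum over these intervals on each of which $\pub^\xi$ is a constant pulled out of the integral, and the remaining $\int r\,[1-(1+\mu/\m)^{-\m}]\,\mathrm{d}r$ with $\mu\propto(r^2+\mathrm{h_{ub}^2})^{-1/\beta}$ evaluates in closed form to the difference $\Psi_\mathrm{ub}^\xi(\mathrm{s},\r_{i+1})-\Psi_\mathrm{ub}^\xi(\mathrm{s},\r_i)$ of Gauss hypergeometric terms. Collecting yields $\mathcal{I}_\mathrm{ug}^\xi$ and the factor $e^{-2\pi\lamu\sum_\xi \mathcal{I}_\mathrm{ug}^\xi}$.

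The GUE--GUE factor $\lapicc$ is the delicate one, because, conditioned on the typical BS serving a GUE, the other active GUEs do not form a homogeneous PPP as seen from that BS: each interfering GUE is associated to its own closest BS and hence lies farther from the typical BS than from its serving BS. I would handle this as in \cite{ChuCotDhi2017,SinZhaAnd:15}: model the interferers by the non-homogeneous PPP of intensity $\lamci(r) = \lamb(1-e^{-\lamb\pi r^2})$, and for the power control --- which depends on each interferer's distance $x$ to its own serving BS --- condition on $x$ with the Rayleigh density $2\pi\lamb x e^{-\lamb\pi x^2}$, split by the LoS state of that serving link (giving the $\prcb^\L(x),\prcb^\N(x)$ weights and the $P_\mathrm{g}^\L$ vs $P_\mathrm{g}^\N$ selection), which together with the outer PPP integration produces the $(2\pi\lamb)^2$ prefactor. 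The geometric constraint ``interferer no closer to the typical BS than $x$'' becomes a lower truncation of the radial integral, implemented under the step-function discretization by starting the sum at $i=j(x)$ with $r_{j(x)}=x$, where $j(x)=\lfloor x\sqrt{\a_1\a_2}/1000\rfloor+1$ locates $x$ on the grid. The same fading averaging and hypergeometric integration then give $\mathcal{I}_\mathrm{gg}^\xi$ as in \eqref{Iccxi} and the factor $e^{-(2\pi\lamb)^2\sum_\xi\mathcal{I}_\mathrm{gg}^\xi}$; multiplying the two factors yields $\lapic$, and substitution into \eqref{eq:GUEULCoverage} completes the proof.

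The main obstacle is precisely this last step: getting the GUE--GUE interference right requires simultaneously using the correct non-homogeneous interferer intensity, correlating it with the hidden serving distance $x$ that drives fractional power control, and enforcing the distance constraint between the two --- all while keeping the LoS/NLoS case split and the step-function discretization mutually consistent, so that the grid index $j(x)$ and the substitution $r_{j(x)}=x$ land in the right place. By contrast, the UAV-interference factor and the Nakagami/Laplace bookkeeping are comparatively routine adaptations of Theorem~\ref{U2Ucov_theorem}.
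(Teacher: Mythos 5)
Your proposal is correct and follows essentially the same route as the paper's proof: the Nakagami CCDF/derivative bookkeeping for \eqref{eq:GUEULCoverage}, the factorization of $\lapic$ into the UAV-to-BS factor (reused from Theorem~\ref{U2Ucov_theorem}) and the GUE--GUE factor, and the latter handled via the non-homogeneous intensity $\lamci(r)$ with the interferer's own serving distance $x$ (Rayleigh-weighted, LoS/NLoS-split) driving the fractional power control and the constraint $r\geq x$ giving the truncated sum starting at $j(x)$. The only cosmetic difference is that the paper makes the interchange of the $x$- and $r$-integrations explicit before discretizing over the step-function grid, whereas you phrase it directly as conditioning on $x$ and truncating the radial integral.
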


\begin{proof}
	See Appendix \ref{C2Bcov_proof}.
\end{proof}
\section{Numerical Results and Discussion}
\label{sec:numerical}

\begin{table}
	\centering
	\caption{System Parameters}
	\label{table:parameters}
	\def\arraystretch{1.2}
	\begin{tabulary}{\columnwidth}{ |p{2.2cm} | p{5.2cm} | }
		\hline
		\textbf{Deployment} 			&  \\ \hline  
		BS distribution		& PPP with $\lamb = 5$~/~Km$^2$\\ \hline
		GUE distribution 				& One active GUE per cell, $\hc=1.5$~m \\ \hline
		UAV distribution 				& $\lamu = 1$~/~Km$^2$, $\sigma_\mathrm{u} = 100\,$m, $\hc$=100~m \cite{3GPP36777} \\ \hline\hline
		\textbf{Channel model} 			&  \\ \hline
		\multirow{2}{*}{Ref. path loss [dB]}		&  $\hat{\tau}_\mathrm{cb}^\L = 28+20\log_{10}(f_c)$ \enspace ($f_c$ in GHz) \\ \cline{2-2}
		& $\hat{\tau}_\mathrm{gb}^\N = 13.54+20\log_{10}(f_c)$ \\ \cline{2-2}
		& $\hat{\tau}_\mathrm{ub}^\L = 28+20\log_{10}(f_c)$ \\ \cline{2-2}
		& $\hat{\tau}_\mathrm{ub}^\N = -17.5+20\log{10}(40\pi f_c/3)$ \\ \cline{2-2}
		& $\hat{\tau}_\mathrm{gu}^\L = 30.9+20\log_{10}(f_c)$  \\ \cline{2-2}
		&  $\hat{\tau}_\mathrm{gu}^\N = 32.4+20\log_{10}(f_c)$  \\ \cline{2-2}
		& $\hat{\tau}_\mathrm{uu}^\L = 28+20\log_{10}(f_c)$\\ \cline{2-2}
		& $\hat{\tau}_\mathrm{uu}^\N = -17.5+20\log{10}(40\pi f_c/3)$  \\ \hline
		\multirow{2}{*}{Path loss exponent}		&  $\alcb = 2.2,~~~\ancb = 3.9$ \\ \cline{2-2}
		& $\aubl = 2.2,~~~\aubn = 4.6-0.7\log_{10}(\hu)$ \\ \cline{2-2}
		& $\acul = 2.225-0.05\log_{10}(\hu)$ \\ & $\acun = 4.32-0.76\log_{10}(\hu)$ \\ \cline{2-2}
		& $\auul = 2.2,~~~\auun = 4.6-0.7\log_{10}(\hu)$\\ \hline
		Small-scale fading  & Rayleigh\ifx\[$\else\tablefootnote{
		After deriving analytical results under Nakagami-m small-scale fading, we now consider Rayleigh as a special case. This has been shown not to change the qualitative performance trends \cite{azari2017coexistence,LopDinLi2018GC}.}\fi, i.e.,  $\mxy^\xi = 1$ \\ \hline
		Prob. of LoS & ITU model as per \eqref{PrLoS} \\ \hline
		Thermal noise 				& -174 dBm/Hz with 7~dB noise figure \cite{3GPP36777}\\ \hline \hline
		\textbf{PHY} 			&  \\ \hline
		\multirow{2}{*}{Spectrum}		& Carrier frequency: 2~GHz \cite{3GPP36777} \\ \cline{2-2}
		& System bandwidth: 10 MHz with 50 PRBs\cite{3GPP36777}\\ \hline
		BS antennas 		& See \eqref{ElementGain} for elements gain 
		\\ \hline
		BS array configuration 		& Height: $25$~m, downtilt: $102^{\circ}$, $8\times 1$ vertical array, 1 RF chain, element spacing: $0.5~\lambda$ \cite{3GPP36777}\\ \hline
		Power control		& Fractional power control based on GUE-to-BS (resp. U2U) large-scale fading for GUEs (resp. UAVs), with $\ec = \eu = 0.6$, $\pcr = \pur = -58$~dBm, and $P^{\textrm{max}}_\c = P^{\textrm{max}}_\u = 24$~dBm \cite{BarGalGar2018GC}\\ \hline
		GUE/UAV antenna 		& Omnidirectional with 0~dBi gain \cite{3GPP36777} \\ \hline
		\end{tabulary}
\end{table}


We now provide numerical results to 
evaluate the performance of U2U and GUE UL communications sharing the same spectrum. Specifically, we concentrate on characterizing the impact that the UAV altitude, the UAV power control, and the U2U distance have on the U2U and GUE UL links. Unless otherwise specified, the system parameters used in this section are provided in Table~\ref{table:parameters}.

We model the U2U link distance $\Ru$ via a truncated Rayleigh distribution with probability density function (PDF)
\begin{align}
f_{\Ru}(\ru) = \frac{\ru e^{-\r_\u^2/(2\sigma_\mathrm{u}^2)}}{\sigma_\mathrm{u}^2\left(1-\e^{-r_\mathrm{M}^2/(2\sigma_\mathrm{u}^2)}\right)}\cdot\mathds{1}(\ru<r_\mathrm{M}),
\end{align}
where $\r_\mathrm{M}$ is the maximum U2U link distance, $\mathds{1}(\cdot)$ is the indicator function, and $\sigma_\mathrm{u}$ is the Rayleigh scale parameter, related to the mean distance $\bar{R}_\mathrm{u}$ through $\sigma_\mathrm{u} = \sqrt{\frac{2}{\pi}}\bar{R}_\mathrm{u}$.

As for the probability of LoS between any pair of nodes x and y, we employ the well known ITU model \cite{ITU2012, azari2018cellular}:
\begin{equation}
\prxy^\L(r) \!=\! \!\!\!\!\!\!\! \prod_{j=0}^{\lfloor{\frac{r\sqrt{\a_1\a_2}}{1000}-1\rfloor}} \!\!\left[1\!-\!\exp\!\left(\!-\frac{\left[\!\hx\!-\!\frac{(j+0.5)(\hx-\hy)}{\mathrm{k}+1} \!\right]^2}{2 \a_3^2}\right) \!\right]\!,\!\!
\label{PrLoS}
\end{equation}
where $\{\a_1, \a_2, \a_3\}$ are environment-dependent parameters set to $\{0.3, 500, 20\}$ to model an urban scenario \cite{ITU2012}.

Fig.~\ref{Pcov_T_hu} shows the CCDF of the SINR per PRB experienced by: (i) U2U links, (ii) the UL of GUEs in the presence of U2U links, and (iii) the UL of GUEs without any U2U links. For (i) and (ii), we consider two UAV heights, namely 50~m and 150~m. 
Fig.~\ref{Pcov_T_hu} also allows to make the following observations:
\begin{itemize}[leftmargin=*]
\item U2U communications degrade the UL performance of GUEs. However, such performance loss amounts to less than 3~dB in median, since (i) BSs perceive interfering UAVs through their antenna sidelobes, and (ii) UAVs generally transmit with low power due to the good U2U channel conditions.
\item The U2U performance degrades as UAVs fly higher, due to an increased UAV-to-UAV and GUE-to-UAV interference. The former is caused by a higher probability of LoS between a receiving UAV and interfering UAVs. The latter is caused by a higher probability of LoS between a receiving UAV and interfering GUEs, whose effect outweighs having larger GUE-UAV distances.
\item The GUE UL performance also degrades as UAVs fly higher. However, this degradation is less significant than that experienced by the U2U links, since interference generated by GUEs in other cells is dominant.
\end{itemize}

\begin{figure}
	\centering
	\includegraphics[width=\figwidth]{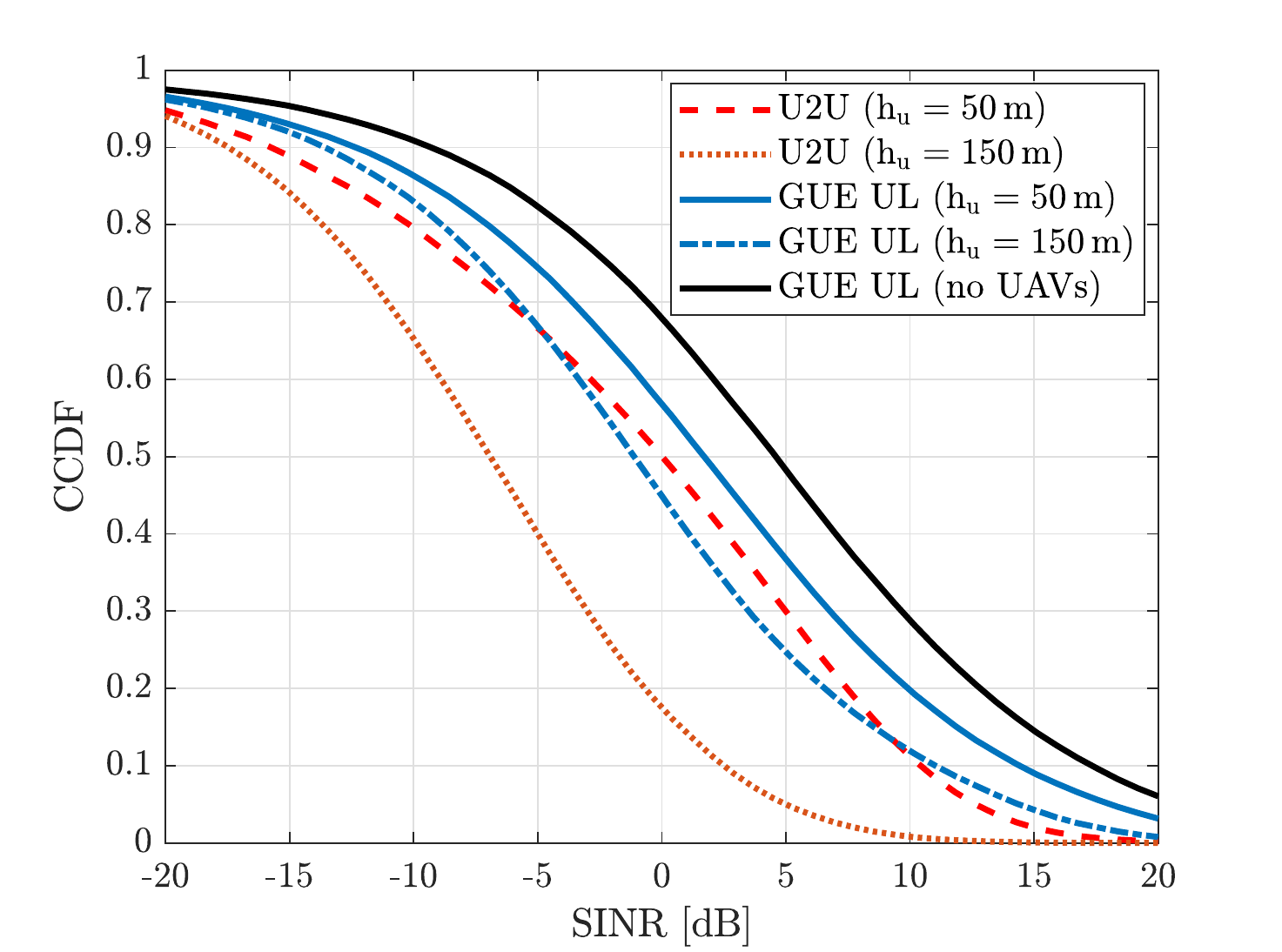}
	\caption{CCDF of the SINR per PRB experienced by: (i) U2U links, (ii) GUE UL in the presence of U2U links, and (iii) GUE UL without U2U links, for $h_u = \lbrace 50, 150 \rbrace$~m.}
	\label{Pcov_T_hu}
\end{figure}

Fig.~\ref{SigInt_epsU_UAVandGUE} illustrates (i) the mean useful received power, (ii) the mean interference power received from GUEs, and (iii) the mean interference power received from UAVs, for both U2U and GUE UL links. These metrics are plotted as a function of the fractional power control factor $\eu$ employed by UAVs. We may observe the following:
\begin{itemize}[leftmargin=*]
\item The UAV power control policy has a significant impact on the performance of both U2U and UL GUE links.
\item In the scenario under consideration, the mean interference perceived by GUEs is dominated by the GUE-generated transmissions from other cells for $\eu<0.6$, where such interference also remains small compared to the mean useful received power. The interference generated by UAVs dominates for larger values of $\eu$, and it saturates for $\eu > 0.9$, since almost all UAVs transmit with their maximum allowed power.
\item The mean interference perceived by UAVs is dominated by the GUE-generated transmissions for $\eu<0.7$, where such interference is also relatively large compared to the mean useful received power. For larger values of $\eu$, the UAV-to-UAV interference becomes dominant and keep growing alongside the useful signal up to $\eu=0.9$, when almost all UAVs operate at maximum power.
\end{itemize}

\begin{figure}
	\centering
	\includegraphics[width=\figwidth]{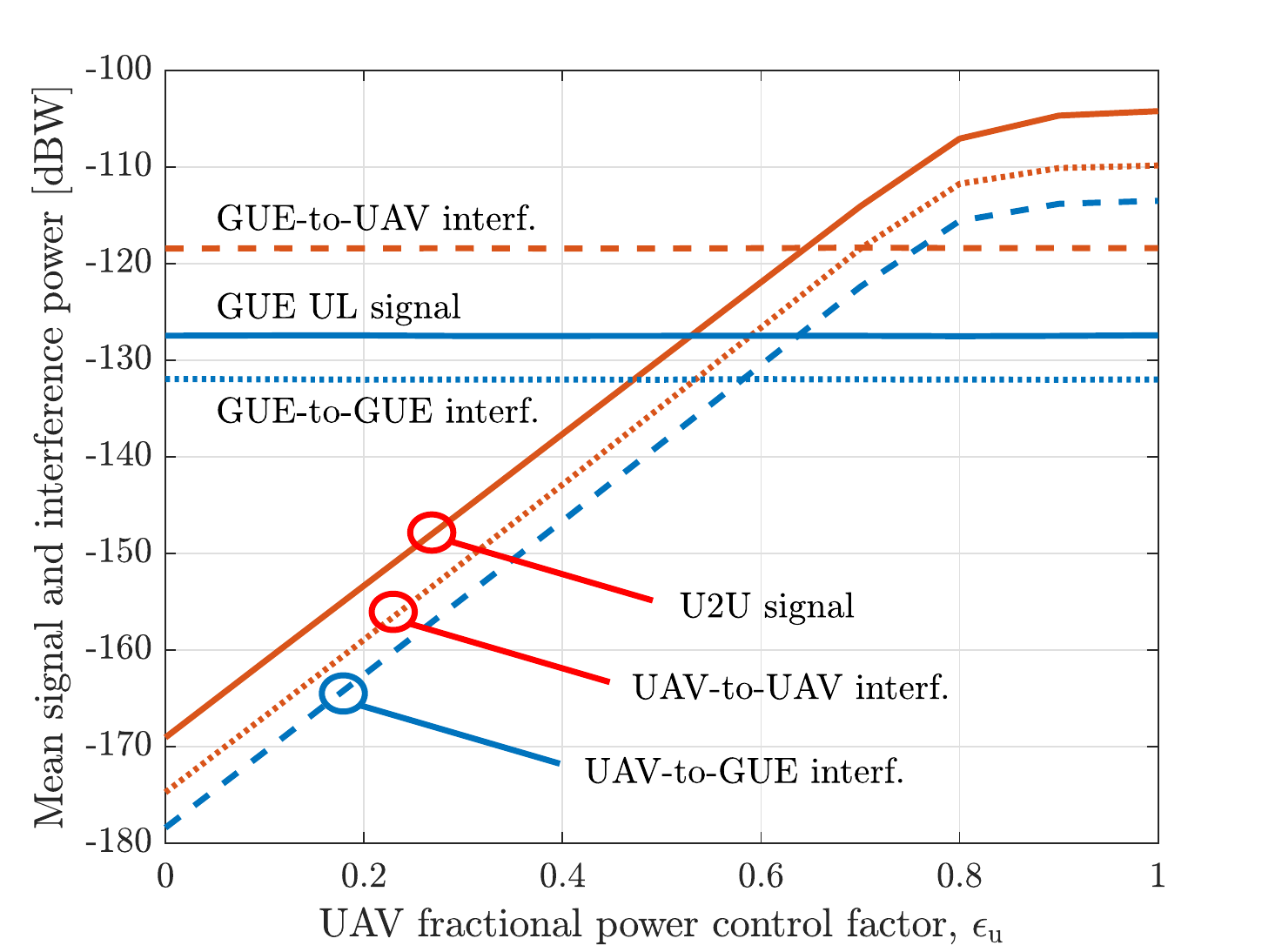}
	\caption{Mean values of the (i) received signal power, (ii) interference generated by GUEs, and (iii) interference generated by UAVs, as a function of the UAV's fractional power control factor $\eu$, and for both U2U links and GUE UL.}
	\label{SigInt_epsU_UAVandGUE}
\end{figure}

Fig.~\ref{Pcov_epsU_T} shows the probability of experiencing SINRs per PRB larger than -5~dB for both U2U and UL GUE links as a function of $\eu$. We also consider three different values for the U2U distance parameter $\sigmau$, namely 50~m, 100~m, and 150~m, corresponding to mean U2U distances $\bar{R}_\mathrm{u}$ of 63~m, 125~m, and 188~m, respectively. Fig.~\ref{Pcov_epsU_T} allows us to draw the following conclusions:
\begin{itemize}[leftmargin=*]
\item There exists an inherent tradeoff between the performance of U2U and GUE UL, whereby increasing $\eu$ improves the former at the expense of the latter: 
\begin{itemize}[leftmargin=*]
\item For $0 < \eu < 0.4$, the U2U performance is deficient, since UAVs use a very low transmission power. In this range, the UL GUE performance is approximately constant, since the GUE-generated interference is dominant.
\item For $0.4 < \eu < 0.9$, the U2U performance increases at the expense of the UL GUE links.
\item For $\eu > 0.9$, the U2U performance saturates and that of the GUEs stabilizes, since almost all aerial devices reach their maximum transmit power.
\end{itemize}
\item Smaller U2U link distances---for fixed UAV density---correspond to a better U2U performance for all values of $\eu$. This is because (i) UAVs perceive larger received signal powers for decreasing $\sigmau$, since the path loss of the U2U links diminishes faster than the UAV transmit power when $\sigmau$ lessens, and (ii) the reduced UAV-to-UAV interference due to the smaller transmission power employed by UAVs.
\item The UL GUE links also benefit from smaller U2U link distances when $\eu > 0.4$, since UAVs lower their transmit power and therefore reduce the UAV-to-BS interference.
\end{itemize}

\begin{figure}
	\centering
	\includegraphics[width=\figwidth]{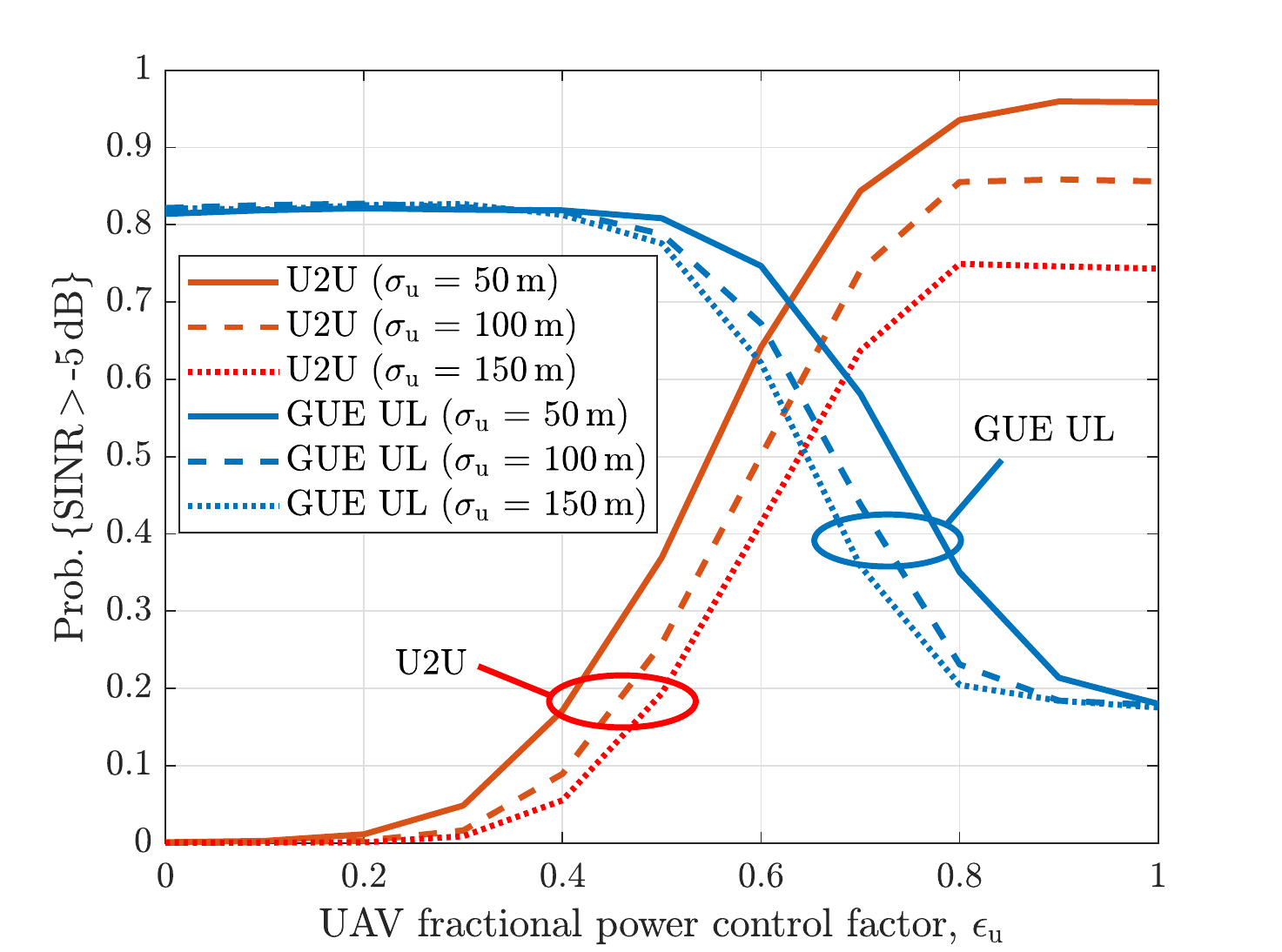}
	\caption{Probability of having SINRs $>-5$~dB for U2U and GUE UL links vs. the UAV's fractional power control factor $\eu$, and for $\sigmau = \lbrace 50, 100, 150 \rbrace$.}
	\label{Pcov_epsU_T}
\end{figure}
\section{Conclusion}
\label{sec:conclusion}

In this article, we have provided an analytical framework to study U2U communications underlayed with the UL of a cellular network. When considering a realistic channel model, antenna pattern, and power control policy, we demonstrated that communications between pairs of close-by UAVs have a limited effect on the GUE UL, since the strong U2U channel gains allow UAVs to lower their transmit power. Our results also showed that both the U2U and GUE UL SINRs diminish as UAVs fly higher, since aerial equipment encounters LoS propagation conditions with a larger number of nodes, which leads to an overall interference growth. We also demonstrated how the UAV power control policy serves to find a performance trade-off between U2U and UL GUE communications.

%



\begin{appendix}

\subsection{Sketch of Proof of Theorem \ref{U2Ucov_theorem}} \label{U2Ucov_proof}

	To obtain the U2U link coverage we can write
	
	\begin{align}
	\pcovuav
	& = \sum_{\nu \in \{\mathrm{L},\mathrm{N}\}}\int_0^{\rMuu}  \mathcal{C}_{\mathrm{u}|\Ru}^\nu(\ru)\,f_{\Ru}^\nu(\ru) \,\mathrm{d}\ru,
	\end{align}
	where
	\begin{align} \label{CuRu}
	\mathcal{C}_{\mathrm{u}|\Ru}^\nu(\ru) 
	&= \sum_{i=0}^{\mathrm{m_{uu}^\nu}-1} (-1)^i \q_{\u,i}^\nu \cdot \D^i_{\mathrm{s_u}} \left[ \lapiu^\nu(\mathrm{s_u}) \right],
	\end{align}
	obtained using the CDF of gamma-distributed small-scale fading in \eqref{FadingCDF}. As for the interference, one can write
	\begin{align} \label{lapiu}
	\lapiu^\nu(\mathrm{s_u}) \!=\! \lapicul^\nu(\mathrm{s_u}) \!\cdot\! \lapicun^\nu(\mathrm{s_u}) \!\cdot\! \lapiuul^\nu(\mathrm{s_u}) \!\cdot\! \lapiuun^\nu(\mathrm{s_u}),
	\end{align}
	where $I_\mathrm{xy}^\xi$ is the interference imposed by nodes x of condition $\xi$ on y. Each term in \eqref{lapiu} can be characterized as follows:
	\begin{align}
	\lapixyxi^\nu &= e^{ -2 \pi \lambda_\mathrm{x} \mathcal{I}_\mathrm{xy}^\xi };~~\xi \in \{\L,\N\}
	\end{align}
	where $\lambda_\c=\lamb$ accounts for the density of active users, and
	\begin{align} \label{IxyxiCal}
	&\mathcal{I}_\mathrm{xy}^\xi \!=\!
	\sum_{i = 1}^\infty \prxy^\xi(\r_i) \mathbb{E}_{\px,\sixy^\xi} \hspace{-0.15cm} \left[ \int_{\r_{i}}^{\r_{i+1}} \hspace{-0.2cm} \Big(\!1 \!-\! e^{-\s \px d_\mathrm{xy}^{-\axy^\xi} \sixy^\xi} \!\Big) r \mathrm{d}r \right],
	\end{align}
	with $\s = \mathrm{s_y} \frac{\mathrm{g_{xy}}(\r_i)}{\hat{\tau}_{\mathrm{xy}}^\xi} $.
	In the following we calculate the integral term in \eqref{IxyxiCal} by considering a change of variable as $\omega = \s \px \,d_\mathrm{xy}^{-\axy^\xi} \, \sixy^\xi$. Therefore we can rewrite 
	\begin{equation}
	\begin{aligned}
	\int_{\r_{i}}^{\r_{i+1}}  &\Big(1- e^{-\s \px \,d_\mathrm{xy}^{-\axy^\xi} \, \sixy^\xi} \Big)\, r \,\mathrm{d}r \\
	&\!\!\!\!\!\!= \frac{(\s \px \sixy^\xi)^{\bxy^\xi}}{\axy^\xi} \int_{ \omega_2}^{ \omega_1} \omega^{-1-\bxy^\xi} (1-e^{-\omega}) \d\omega,
	\end{aligned}
	\end{equation}
	where $\bxy^\xi \triangleq 2/\axy^\xi$, $\omega_1 =  \mu_1 \sixy^\xi$, $\omega_2 =  \mu_2 \sixy^\xi$, and
	\begin{align}
	\mu_1 &\triangleq \frac{\s \px}{(\r_i^2+\mathrm{h_{xy}^2})^{\axy^\xi/2}},~~~	\mu_2 \triangleq \frac{\s \px}{(\r_{i+1}^2+\mathrm{h_{xy}^2})^{\axy^\xi/2}}.
	\end{align}
	We also have
	\begin{align} \nonumber
	\int_{ \omega_2}^{ \omega_1} &\omega^{-1-\bxy^\xi} (1-e^{-\omega}) \d\omega = \frac{\axy^\xi}{2} \Big[ \omega_2^{-\bxy^\xi} (1-e^{-\omega_2}) \\
	&- \omega_1^{-\bxy^\xi} (1-e^{-\omega_1})  + \int_{\omega_2 }^{\omega_1} y^{-\bxy^\xi} e^{-\omega}  \,\mathrm{d}\omega \Big],
	\end{align}
	where integration by parts is applied, and
	\begin{align}
	\int_{\omega_2 }^{\omega_1} \!\!\omega^{-\bxy^\xi} e^{-\omega}  \,\mathrm{d}\omega \!=\! \gamma\!\left(1\!-\!\bxy^\xi,\omega_1\right) \!-\! \gamma\left(1\!-\!\bxy^\xi,\omega_2\right),
	\end{align}
	where we used the definition of incomplete gamma function.
	It follows that
	\begin{equation}
	\begin{aligned} \label{integralStep}
	&\int_{\r_{i}}^{\r_{i+1}}  \Big(1- e^{-\s \px \,d_\mathrm{xy}^{-\axy^\xi} \, \sixy^\xi} \Big)\, r \,\mathrm{d}r \\ 
	& = \frac{\r_{i+1}^2+\mathrm{h_{xy}^2}}{2}(1-e^{-\mu_2\sixy^\xi}) - \frac{\r_{i}^2+\mathrm{h_{xy}^2}}{2}(1-e^{-\mu_1 \sixy^\xi}) \\
	&\!+ \frac{(\s \px \sixy^\xi)^{\bxy^\xi}}{2} \Big[ \! \gamma\!\left(1\!-\!\bxy^\xi,\mu_2\sixy^\xi \right) \!-\!  \gamma\!\left(1\!-\!\bxy^\xi,\mu_1 \sixy^\xi \right) \!\Big]\!.
	\end{aligned}
	\end{equation}
	We note that for Nakagami-m fading $\psi$ with parameter $\mathrm{m}$ we have 
	\begin{align}
	\mathbb{E}_{\psi} \left[e^{-\mu \psi}\right] = \left(1+\frac{\mu}{\mathrm{m}}\right)^{-\mathrm{m}}
	\end{align}
	and
	\small
	\begin{equation}
	\begin{aligned}
	&\mathbb{E}_{\psi} \left[\psi^{\beta} \gamma \left(1-\beta,\mu \psi \right)\right]  \\
	\!\!\!\!\!\!\!&= \frac{\mathrm{m}^{1+\mathrm{m}} \mu^{1-\beta}}{(1-\beta)(\mathrm{m}+\mu)^{1+\mathrm{m}}}\,{_2}F_1\left(1,1\!+\!\mathrm{m};2\!-\!\beta;\frac{\mu}{\mu+\mathrm{m}}\right).
	\end{aligned}
	\end{equation}
	\normalsize
	Now through transformation properties of the hypergeometric function 
	we can write
	\begin{equation}
	\begin{aligned} \label{hyperFunc}
	&{_2}F_1\left(1,1+\m;2-\beta;\frac{\mu}{\mu+\m}\right) \\
	&= \left(\frac{\m}{\mu\!+\!\m}\right)^{\!-1-\m} \!\!\!\!\!{_2}F_1\left(1\!+\!\m,1\!-\!\beta;2\!-\!\beta;-\frac{\mu}{\m}\right).
	\end{aligned}
	\end{equation}
	Consequently, by using \eqref{integralStep}--\eqref{hyperFunc} we have
	\begin{equation}
	\begin{aligned}
	\mathbb{E}_{\sixy^\xi} &\left[ \int_{\r_{i}}^{\r_{i+1}}  \Big(1- e^{-\s \px \,d_\mathrm{xy}^{-\axy^\xi} \, \sixy^\xi} \Big)\, r \,\mathrm{d}r \right] \\ &= \Psi_\mathrm{xy}^\xi\left(\s,\r_{i+1}\right) - \Psi_\mathrm{xy}^\xi\left(\s,\r_{i}\right).
	\end{aligned}
	\end{equation}
	Accordingly,
	\begin{align} \nonumber \label{IxyFinal}
	&\mathcal{I}_\mathrm{xy}^\xi 
	\!=\! \int_0^\infty  \!\!\!f_{R_\mathrm{x}}^\mathrm{L}(x)\!\sum_{i = 1}^\infty \prxy^{\xi} \Big(\!\underbrace{\Psi_\mathrm{xy}^\xi\left(\mathrm{s},\r_{i+1}\right) \!-\! \Psi_\mathrm{xy}^\xi\left(\mathrm{s},\r_{i}\right)}_\text{computed at $P_\mathrm{x}^\mathrm{L}$}\!\Big)  \mathrm{d}x\\
	&+ \int_0^\infty  f_{R_\mathrm{x}}^\mathrm{N}(x)\,\sum_{i = 1}^\infty \prxy^{\xi} \Big(\underbrace{\Psi_\mathrm{xy}^\xi\left(\mathrm{s},\r_{i+1}\right) - \Psi_\mathrm{xy}^\xi\left(\mathrm{s},\r_{i}\right)}_\text{computed at $P_\mathrm{x}^\mathrm{N}$}\Big)  \,\mathrm{d}x.
	\end{align}
	By noting that
	\begin{equation}
	\begin{aligned}
	\sum_{i = 1}^\infty \prxy^\xi(\r_i) &\Big(\Psi_\mathrm{xy}^\xi\left(\s,\r_{i+1}\right) - \Psi_\mathrm{xy}^\xi\left(\s,\r_{i}\right)\Big) \\
	&\!\!\!= \sum_{i = 1}^\infty \left[\prxy^{\xi}(\r_{i-1})-\prxy^{\xi}(\r_{i})\right] \Psi_\mathrm{xy}^\xi\left(\mathrm{s},\r_{i}\right),
	\end{aligned}
	\end{equation}
	the desired result is obtained.

\subsection{Sketch of Proof of Theorem \ref{C2Bcov_theorem}} \label{C2Bcov_proof}

Similar to U2U coverage analysis, we can write
\begin{align}
\pcovc &
= \sum_{\nu \in \{\mathrm{L},\mathrm{N}\}}\int_0^\infty  \mathcal{C}_{\c|\Rc}^\nu(\rc)\,f_{\Rc}^\nu(\rc) \,\mathrm{d}\rc,
\end{align}
where
\begin{align}
\mathcal{C}_{\c|\Rc}^\nu(\rc) 
&= \sum_{i=0}^{\mathrm{m_{cb}^\nu}-1} (-1)^i \q_{\c,i}^\nu \cdot \D_{\s_\c}^i\left[ \lapic^\nu(\s_\c) \right],
\end{align}
where the last equation is derived similarly to \eqref{CuRu}.

The aggregate interference can be derived as follows
\begin{align}
\lapic^\nu(\s_\c) = \lapiucl^\nu(\s_\c) \!\cdot\! \lapiucn^\nu(\s_\c) \!\cdot\! \lapiccl^\nu(\s_\c) \!\cdot\! \lapiccn^\nu(\s_\c),
\end{align}
where $\lapiucl$ and $\lapiucn$ are obtained similarly to Theorem \ref{U2Ucov_theorem} by using \eqref{IxyFinal}. To characterize GUEs interference one can write
\begin{align}
\mathcal{L}_{I_\mathrm{gg}^\xi} &\!=\! e^ {-2 \pi \int_0^\infty \lamci(r) \Big(1-\mathbb{E}_{\pc,\scb^\xi} \left[ e^{-\s_\c \pc \zcb^\xi(r)^{-1} \scb^\xi} \right]\Big) r \mathrm{d}r }\!\!\!\!.
\end{align}
Therefore $\mathcal{L}_{I_\mathrm{gg}^\xi} = e^{-(2\pi\lamb)^2 \,\mathcal{I}_\mathrm{gg}^\xi}$, where
\small
\begin{equation}
\begin{aligned}
&\mathcal{I}_\mathrm{gg}^\xi = \sum_{\nu \in \{\L,\N\}} \int_0^\infty \prcb^\xi(r) \times\\
&\! \int_0^r  \!\!\prcb^\nu(x) x e^{-\lamb \pi x^2} \! \Bigg(\!1\!-\!\mathbb{E}_{\scb^\xi} \Bigg[ e^{-\frac{\s_\c \pc^\nu(x)  \, \scb^\xi}{\zcb^\xi(r)}} \Bigg]\Bigg) \mathrm{d}x r \mathrm{d}r.
\end{aligned}
\end{equation}
\normalsize
We can write
\begin{equation}
\begin{aligned}
\int_0^\infty \prcb^\xi(r) &\int_0^r \prcb^\nu(x) x e^{-\lamb \pi x^2} \\ 
&\!\!\!\!\times \Bigg(1-\mathbb{E}_{\scb^\xi} \Bigg[ e^{-\frac{\s_\c \pc^\nu(x)  \, \scb^\xi}{\zcb^\xi(r)}} \Bigg]\Bigg)\,\mathrm{d}x \,  r \,\mathrm{d}r \\
\!\!\!&\hspace{-2cm}= \int_0^\infty \prcb^\nu(x) x e^{-\lamb \pi x^2} \int_x^\infty  \prcb^\xi(r) \\ 
&\!\!\!\!\times \Bigg(1-\mathbb{E}_{\scb^\xi} \Bigg[ e^{-\frac{\s_\c \pc^\nu(x) \, \scb^\xi}{\zcb^\xi(r)}} \Bigg]\Bigg)\,r \,\mathrm{d}r\mathrm{d}x.
\end{aligned}
\end{equation}
\normalsize
To conclude the proof, we derive the inner integral as follows
\begin{equation}
\begin{aligned}
&\int_x^\infty  \prcb^\xi(r) \Bigg(1-\mathbb{E}_{\scb^\xi} \Bigg[ e^{-\frac{\s_\c \pc^\nu(x) \, \scb^\xi}{\zcb^\xi(r)}} \Bigg]\Bigg)\,r \,\mathrm{d}r \\
&= \sum_{i = j(x)}^\mathrm{\infty} \prcb^\xi(\r_i)\, \Bigg(\underbrace{\Psi_\mathrm{gb}^\xi\left(\s,\r_{i+1}\right) - \Psi_\mathrm{gb}^\xi\left(\s,\r_{i}\right)}_{\text{at $\pc = \pc^\nu(x)$}} \Bigg),
\end{aligned}
\end{equation}
where $\mathrm{s} = \s_\c \frac{g_\mathrm{{gb}(\r_i)}}{\hat{\tau}_\mathrm{gb}^\xi}$. Note that we assume the BS antenna gain is invariant within the interval $[r_{i},r_{i+1}]$ so that $g_\mathrm{gb}(r) = g_\mathrm{gb}(\r_i)$ is a constant value. 

\end{appendix}
\ifCLASSOPTIONcaptionsoff
  \newpage
\fi
\bibliographystyle{IEEEtran}
\bibliography{Strings_Gio,Bib_Gio,Bib_Mahdi}

\begin{thebibliography}{10}
\providecommand{\url}[1]{#1}
\csname url@samestyle\endcsname
\providecommand{\newblock}{\relax}
\providecommand{\bibinfo}[2]{#2}
\providecommand{\BIBentrySTDinterwordspacing}{\spaceskip=0pt\relax}
\providecommand{\BIBentryALTinterwordstretchfactor}{4}
\providecommand{\BIBentryALTinterwordspacing}{\spaceskip=\fontdimen2\font plus
\BIBentryALTinterwordstretchfactor\fontdimen3\font minus
  \fontdimen4\font\relax}
\providecommand{\BIBforeignlanguage}[2]{{%
\expandafter\ifx\csname l@#1\endcsname\relax
\typeout{** WARNING: IEEEtran.bst: No hyphenation pattern has been}%
\typeout{** loaded for the language `#1'. Using the pattern for}%
\typeout{** the default language instead.}%
\else
\language=\csname l@#1\endcsname
\fi
#2}}
\providecommand{\BIBdecl}{\relax}
\BIBdecl

\bibitem{Qualcomm2017}
{Qualcomm Technologies Inc.}, ``{LTE} unmanned aerial aircraft systems,''
  \emph{Trial Report, v1.0.1}, May 2017.

\bibitem{GarGerLop2018}
{A.~Garcia Rodriguez}, G.~Geraci, {D.~L\'{o}pez-P\'{e}rez}, {L.~Galati
  Giordano}, M.~Ding, and {E.~Bj\"{o}rnson}, ``The essential guide to realizing
  {5G}-connected {UAVs} with massive {MIMO},'' \emph{\emph{available as}
  arXiv:1805.05654, \emph{May 2018}.}

\bibitem{MozSaaBen2018}
M.~{Mozaffari}, W.~{Saad}, M.~{Bennis}, Y.~{Nam}, and M.~{Debbah}, ``A tutorial
  on {UAVs} for wireless networks: {A}pplications, challenges, and open
  problems,'' \emph{{IEEE} Commun. Surveys Tuts.}, vol.~PP, pp. 1--1, 2019.

\bibitem{azari2017ultra}
M.~M. Azari, F.~Rosas, K.-C. Chen, and S.~Pollin, ``Ultra reliable {UAV}
  communication using altitude and cooperation diversity,'' \emph{{IEEE} Trans.
  Commun.}, vol.~66, no.~1, pp. 330--344, 2018.

\bibitem{Cbinsights:19}
{CB Insights}, ``38 ways drones will impact society: From fighting war to
  forecasting weather, {UAVs} change everything,'' Jan. 2019.

\bibitem{Ericsson:18}
{Ericsson}, ``Drones and networks: Ensuring safe and secure operations,''
  \emph{white paper}, Nov. 2018.

\bibitem{YanLinLi18}
G.~Yang, X.~Lin, Y.~Li, H.~Cui, M.~Xu, D.~Wu, H.~Ryd\'{e}n, and S.~B. Redhwan,
  ``A telecom perspective on the internet of drones: From {LTE}-advanced to
  {5G},'' \emph{\emph{Available as} arXiv:1803.11048, \emph{Mar. 2018}.}

\bibitem{azari2018cellular}
M.~M. Azari, F.~Rosas, and S.~Pollin, ``Cellular connectivity for {UAVs}:
  Network modeling, performance analysis and design guidelines,''
  \emph{\emph{to appear in} IEEE Trans. on Wireless Commun. \emph{Available as}
  arXiv:1804.08121, \emph{Apr. 2018}.}

\bibitem{fotouhi2018survey}
A.~{Fotouhi}, H.~{Qiang}, M.~{Ding}, M.~{Hassan}, L.~{Galati-Giordano},
  A.~{Garcia-Rodriguez}, and J.~{Yuan}, ``Survey on {UAV} cellular
  communications: {P}ractical aspects, standardization advancements,
  regulation, and security challenges,'' \emph{{IEEE} Commun. Surveys Tuts.},
  vol.~PP, pp. 1--1, 2019.

\bibitem{LopDinLi2018GC}
{D.~L\'{o}pez-P\'{e}rez}, M.~Ding, H.~Li, {L.~Galati Giordano}, G.~Geraci,
  A.~Garcia-Rodriguez, Z.~Lin, and M.~Hassan, ``On the downlink performance of
  {UAV} communications in dense cellular networks,'' in \emph{Proc. IEEE
  Globecom}, Dec. 2018, pp. 1--7.

\bibitem{NguAmoWig2018}
{H.~C.~Nguyen}, R.~Amorim, J.~Wigard, {I.~Z.~Kov\'{a}cs}, {T.~B.~S{\o}rensen},
  and {P.~Mogensen}, ``How to ensure reliable connectivity for aerial vehicles
  over cellular networks,'' \emph{IEEE Access}, vol.~6, pp. 12\,304--12\,317,
  Feb. 2018.

\bibitem{azari2017coexistence}
M.~M. Azari, F.~Rosas, A.~Chiumento, and S.~Pollin, ``Coexistence of
  terrestrial and aerial users in cellular networks,'' in \emph{Proc. IEEE
  Globecom Workshops}, Dec. 2017, pp. 1--6.

\bibitem{GerGarGal2018}
G.~Geraci, {A.~Garcia Rodriguez}, {L.~Galati Giordano},
  {D.~L\'{o}pez-P\'{e}rez}, and {E.~Bj\"{o}rnson}, ``Understanding {UAV}
  cellular communications: From existing networks to massive {MIMO},''
  \emph{IEEE Access}, vol.~6, Nov. 2018.

\bibitem{DanGarGerICC2019}
C.~D'Andrea, A.~{Garcia-Rodriguez}, G.~Geraci, L.~{Galati Giordano}, and
  S.~Buzzi, ``Cell-free massive {MIMO} for {UAV} communications,''
  \emph{\emph{to appear in} Proc. IEEE ICC Workshops\emph{. Available as}
  arXiv:1902.03578, \emph{Feb. 2019}.}

\bibitem{LinAndGho2014}
X.~Lin, J.~G. Andrews, and A.~Ghosh, ``Spectrum sharing for device-to-device
  communication in cellular networks,'' \emph{{IEEE} Trans. Wireless Commun.},
  vol.~13, no.~12, pp. 6727--6740, Dec. 2014.

\bibitem{ChuCotDhi2017}
Y.~J. Chun, S.~L. Cotton, H.~S. Dhillon, A.~Ghrayeb, and M.~O. Hasna, ``A
  stochastic geometric analysis of device-to-device communications operating
  over generalized fading channels,'' \emph{{IEEE} Trans. Wireless Commun.},
  vol.~16, no.~7, pp. 4151--4165, July 2017.

\bibitem{zeng2019accessing}
Y.~Zeng, Q.~Wu, and R.~Zhang, ``Accessing from the sky: A tutorial on {UAV}
  communications for {5G} and beyond,'' \emph{\emph{available as}
  arXiv:1903.05289, \emph{Mar. 2019}.}

\bibitem{ZhaZhaDi19}
S.~{Zhang}, H.~{Zhang}, B.~{Di}, and L.~{Song}, ``Cellular {UAV-to-X}
  communications: Design and optimization for multi-{UAV} networks,''
  \emph{{IEEE} Trans. Wireless Commun.}, vol.~18, no.~2, pp. 1346--1359, Feb.
  2019.

\bibitem{3GPP36777}
{3GPP Technical Report 36.777}, ``Technical specification group radio access
  network; {S}tudy on enhanced {LTE} support for aerial vehicles ({R}elease
  15),'' Dec. 2017.

\bibitem{SinZhaAnd:15}
S.~Singh, X.~Zhang, and J.~G. Andrews, ``Joint rate and {SINR} coverage
  analysis for decoupled uplink-downlink biased cell associations in
  {HetNets},'' \emph{IEEE Trans. Wireless Commun.}, vol.~14, no.~10, pp.
  5360--5373, Oct. 2015.

\bibitem{YanGerQue:16}
H.~H. Yang, G.~Geraci, and T.~Q.~S. Quek, ``Energy-efficient design of {MIMO}
  heterogeneous networks with wireless backhaul,'' \emph{IEEE Trans. Wireless
  Commun.}, vol.~15, no.~7, pp. 4914--4927, July 2016.

\bibitem{BarGalGar2018GC}
P.~Baracca, {L.~Galati Giordano}, A.~Garcia-Rodriguez, G.~Geraci, and
  {D.~L\'{o}pez-P\'{e}rez}, ``Downlink performance of uplink fractional power
  control in {5G} massive {MIMO} systems,'' in \emph{Proc. IEEE Globecom}, Dec.
  2018, pp. 1--7.

\bibitem{ITU2012}
{ITU-R P.1410-5}, ``Propagation data and prediction methods required for the
  design of terrestrial broadband radio access systems operating in a frequency
  range from 3 to 60 {GHz},'' Feb. 2012.

\end{thebibliography}
\end{document}